\numberwithin{equation}{section}
\def\Nbbd{\mathbb{N}}
\def\Pbbd{\mathbb{P}}
\def\Rbbd{\mathbb{R}}
\def\Sbbd{\mathbb{S}}
\def\Zbbd{\mathbb{Z}}
\def\Fcal{\mathcal{F}}
\def\Lcal{\mathcal{L}}
\def\Wcal{\mathcal{W}}
\def\mbf{\boldsymbol{m}}
\def\ubf{\boldsymbol{u}}
\def\vbf{\boldsymbol{v}}
\def\wbf{\boldsymbol{w}}
\def\Wbf{\boldsymbol{W}}
\def\xbf{\boldsymbol{x}}
\def\Xbf{\boldsymbol{X}}
\def\xibf{\boldsymbol{\xi}}
\newcommand\sfrak{\mathfrak{s}}
\newcommand\Sfrak{\mathfrak{S}}
\newcommand\tfrak{\mathfrak{t}}
\newcommand\Pfrak{\mathfrak{P}}
\newcommand\Hbf{\mathbf{H}}
\newcommand\Bbf{\mathbf{B}}
\newcommand{\rd}{\mathrm{d}}            
\newcommand{\re}{\mathrm{e}}            
\newcommand{\ri}{\mathrm{i}}            
\def\b{\beta}
\def\g{\gamma}
\def\d{\delta}
\def\eps{\varepsilon}
\def\la{\lambda}
\def\s{\sigma}
\def\phi{\varphi}
\def\om{\omega}
\newcommand\sgn{\mathop{\hbox{\rm sgn}}\nolimits}
\newcommand\abs[1]{\left|#1\right|}
\def\dd{\partial}
\def\Ref#1{(\ref{#1})}
\def\endxmpl{\hfill$\bigcirc$}
\def\?{(?)\marginpar[\hfill?|]{|?}}
\def\beq{\begin{equation}}
\def\eeq{\end{equation}}
\def\be{\begin{equation*}}
\def\ee{\end{equation*}}
\newcounter{problem}
{\refstepcounter{problem}{\noindent\bf Problem \theproblem.\marginpar[{\hfill\maltese}]{{\maltese}}}\em }{}
\newcounter{hypothesis}
{\refstepcounter{hypothesis}{\noindent\bf Conjecture \thehypothesis.\marginpar[{\hfill$\blacktriangleleft$}]{{$\blacktriangleleft$}}}\em }{}
\newtheorem{conj}{Conjecture}
\newtheorem{prop}{Proposition}
\newtheorem{lemma}{Lemma}
\newcounter{myexample}
\newcommand\ket[1]{\left|\,#1\right\rangle}
\newcommand\angles[1]{\left\langle#1\right\rangle}
\newcommand\Psid{\Psi^\dagger}
\newcommand\norm[1]{\left\|#1\right\|}
\newcommand\Pequiv{\stackrel{\Pfrak}{\equiv}}
\newcommand\twored{\stackrel{2}{\equiv}}
\newcommand\threered{\stackrel{3}{\equiv}}
\begin{document}
\title{Quantisation of Kadomtsev-Petviashvili equation}
\date{\today}
\author{K K Kozlowski$^1$}
\address{$^1$Univ Lyon, Ens de Lyon, Univ Claude Bernard, CNRS, Laboratoire de Physique, F-69342 Lyon, France.}
\author{E Sklyanin$^2$}
\address{$^2$Department of Mathematics, University of York, York YO10 5DD, UK}
\author{A Torrielli$^3$}
\address{$^3$Department of Mathematics, University of Surrey, Guildford, GU2 7XH, UK}

\begin{abstract}
A quantisation of the KP equation on a cylinder is proposed
that is equivalent to an infinite system of non-relativistic one-dimensional bosons
carrying masses $m=1,2,\ldots$ The Hamiltonian is Galilei-invariant
and includes the split $\Psid_{m_1}\Psid_{m_2}\Psi_{m_1+m_2}$
and merge $\Psid_{m_1+m_2}\Psi_{m_1}\Psi_{m_2}$ terms
for all combinations of particles with masses $m_1$, $m_2$ and $m_1+m_2$,
with a special choice of coupling constants.
The Bethe eigenfunctions for the model are constructed.
The consistency of the coordinate Bethe Ansatz, and therefore,
the quantum integrability of the model is verified up to the mass $M=8$ sector.
\end{abstract}
\maketitle

\section{Introduction}\label{sec:intro}

The Kadomtsev-Petviashvili (KP) equation \cite{konopelchenko1993introduction}
\beq\label{eq:KP_eqmotion}  
   \phi_{t\s} - \phi_{xx} - 2\b(\phi\phi_\s)_\s  + \g\phi_{\s\s\s\s} = 0,
\eeq
is one of the most studied nonlinear integrable equations in 2+1 variables $(\s,x;t)$.
The aim of the present paper is to construct a quantised version of KP
while preserving its integrability.

\textit{A warning:} For reasons explained below, we have deliberately deviated from the standard notation of \cite{konopelchenko1993introduction}
by changing the conventional variable $x$ to $\s$ and $y$ to $x$.

Traditionally, KP is considered as an equation in (2+1)-dimensional space-time, both variables
$\s$ and $x$ playing the role of spatial variables. For our purposes, however, we take a different stance,
viewing only $x$ as a genuine spatial variable and downgrading $\s$ to a mere label indexing the continuum
of fields $\phi$ in (1+1)-dimensional space-time. The notation $(\s,x)$ stresses the changed
roles of the two variables.

We also choose $x$ to run from $-\infty$ to $\infty$, whereas imposing the periodicity condition
$\s\equiv\s+2\pi$ on $\s$. We assume that $\phi\rightarrow0$ sufficiently fast as $x\rightarrow\pm\infty$.

We have also introduced two real coupling constants $\b$ and $\g$ into the equation.
Though, in the classical case, they can be removed by a rescaling of the variables $x$, $\s$, $\phi$,
they are useful for the quantisation and for discussing the limiting cases.
Note that the constants $\b$ and $\g$
may have arbitrary sign, the case $\g>0$ corresponding to the so-called KP-I, respectively
$\g<0$ to KP-II, and $\g=0$ to the so-called dispersionless KP \cite{ManakovSantini2008}.
Since the substitution $\phi:=-\phi$ results in changing the sign of $\b$, one may assume that $\b\ge0$.

The paper is organised as follows.
In Section \ref{sec:classical_kp}, we describe the Poisson structure and the Hamiltonian
of the classical model.
In Section \ref{sec:quantisation}, we quantise the model using the simplest
normal ordering prescription for the Hamiltonian. Passing from the field
$\phi(\s,x)$ to its Fourier components in the variable $\s$ we obtain the description
of the system in terms of the discrete infinite set of canonical fields $\Psid_m(x)$,
$\Psi_m(x)$ labelled by the index $m=1,2,3,\ldots$ and describing scalar nonrelativistic
bosons of mass $m$. The Hamiltonian is Galilei-invariant
and includes the split $\Psid_{m_1}\Psid_{m_2}\Psi_{m_1+m_2}$
and merge $\Psid_{m_1+m_2}\Psi_{m_1}\Psi_{m_2}$ terms
for all combinations of particles with masses $m_1$, $m_2$ and $m_1+m_2$,
with a special choice of coupling constants.

In Section \ref{sec:fock_space},
we describe the Fock space $\Fcal$ of the system and introduce a convenient notation
to handle the infinite number of fields. We realise as well the action of the Hamiltonian on the $N$-particle
state as a differential operator with singular delta-function coefficients.

Due to the conservation of the total mass $M$, the quantum-field-theoretic problem
is reduced to a sequence of quantum-mechanical  problems in sectors
of fixed mass $M$. The structure of mass-$M$
sector $\Fcal_M$ is analysed in Section \ref{sec:structure_of_mass-M_sector}.
Since the number of particles is not preserved,
the sector $\Fcal_M$ splits into the
orthogonal sum of subspaces $\Fcal_M^{\,\mbf}$ labelled by compositions $\mbf$ of number $M$.
The corresponding wave functions are defined on Weyl alcoves $x_1<x_2<\ldots<x_N$,
where $N$ is the length of $\mbf=(m_1,m_2,\ldots,m_N)$.

In Section \ref{sec:from_delta_to_bc},
we interprete the delta-function terms in the Hamiltonian as jump conditions
for the derivatives of components of the wave function, and formulate the complete set of differential
equations and boundary conditions for the wave functions.

In Section \ref{sec:M2sln},
we solve the eigenvalue problem in the sector $M=2$, and compute the two-particle $S$-matrix,
as a rational function having 3 poles and 3 zeroes. The two possible arrangements of the poles
are labelled as quantum KP-I and KP-II cases.

In Section \ref{sec:BAm=111}, we formulate the Bethe Ansatz
in the subsector $\Fcal_M^{(1\ldots1)}$ containing only particles of mass-1.
The Bethe eigenfunction is written as a linear combination of plain waves
with the coefficients that reproduce the correct 2-particle $S$-matrices.
In Section \ref{sec:BAgeneric}, we extend the Bethe Ansatz to the generic sector
of particles with different masses, and formulate the factorisation conjecture
that allows one to reduce the verification of the consistency equations to those for the
subsector $\Fcal_M^{(M)}$ containing a single particle of mass $M$.
In Section \ref{sec:BAm=(M)}, we analyse those equations and describe a solution
that is verified by means of computer algebra up to $M\leq8$.
A more technical discussion of the involved combinatorial issues is left for the appendices.

In the concluding Section \ref{sec:discussion} we sum up the results and discuss the unsolved questions
and perspectives.

\section{Classical KP}\label{sec:classical_kp} 
In this paper, we use the following notation:
$\Zbbd=\{0,\pm1,\pm2,\ldots\}$ stands for the set of integers, $\Nbbd=\{1,2,3,\ldots\}$ the set of natural numbers,
$\Nbbd_0=\{0,1,2,3,\ldots\}$ the set of non-negative integers, $\Rbbd$ the set of real numbers,
$\Sbbd^1=\Rbbd/2\pi\Zbbd$ a circle.

The classical Kadomtsev-Petviashvili (KP) integrable hierarchy \cite{konopelchenko1993introduction}
is formulated in terms of a real-valued scalar field $\phi(\s,x)$ on the cylinder
$\Sbbd^1\times\Rbbd$. The field $\phi$ vanishes sufficiently fast as $x\rightarrow\pm\infty$ and has Poisson brackets
\beq\label{phi-pb}
    \{\phi(\s,x),\phi(\tau,y)\}=2\pi\d'(\s-\tau)\d(x-y),\qquad
    \s,\tau\in \Sbbd^1,\quad x,y\in\Rbbd.
\eeq

Due to the periodicity of $\phi(\s,x)$ in $\s$, the average of $\phi$ over $\Sbbd^1$
belongs to the center of the bracket \Ref{phi-pb}.
In what follows we always set it to 0, assuming that
\beq\label{phi0average}
    \int_0^{2\pi} \rd\s\,\phi(\s,x)=0\quad \forall x\in\Rbbd.
\eeq

Due to \Ref{phi0average}, the antiderivative $\dd_\s^{-1}$ on the space of functions
with zero average over $\mathbb{S}^1$ is defined correctly (one can always choose the integration constant
in a unique way).

There exists an infinite series of commuting Hamiltonians $H_p$, $p=0,1,2,\ldots$
\beq
     \{H_p,H_q\}=0
\eeq
expressed as integrals of local (w.r.t. $x$) densities
\beq
    H_p=\int_{0}^{2\pi} \frac{\rd\s}{2\pi}\int_{-\infty}^{\infty} \rd x\,
    h_p(\s,x).
\eeq
\beq
    h_p(\s,x)=\frac12\,\bigl(\dd_\s^{-p}\phi\bigr)\bigl(\dd_x^p\phi\bigr)+O(\b)+O(\g),\qquad
    \b,\g\rightarrow0
\eeq
such that
\begin{subequations}\label{eq:classical_hamiltonian_densities}
\begin{align}
    h_0(\s,x)&=\frac12\,\phi^2(\s,x),\\
    h_1(\s,x)&=\frac12\,\bigl(\dd_\s^{-1}\phi\bigr)\bigl(\dd_x\phi\bigr),  \\
    h_2(\s,x)&=\frac12\,\bigl(\dd_\s^{-2}\phi\bigr)\bigl(\dd_x^2\phi\bigr)
            +\frac{\b}{3}\phi^3+\frac{\g}{2}(\dd_\s\phi)^2.
\end{align}
\end{subequations}

The corresponding equations of motion $\dd_{t_p}=\{\cdot,H_p\}$
are
\begin{subequations}
\begin{align}
     \phi_{t_0}&=\phi_\s,\\
     \phi_{t_1}&=-\phi_x,\\
     \phi_{t_2}&=\dd_\s^{-1}\phi_{xx}+2\b\phi\phi_\s-\g\phi_{\s\s\s}. \label{eqmo2}
\end{align}
\end{subequations}

Note that $H_0$ and $H_1$ are generators of translations in $\s$ and $x$ respectively.

Differentiating \Ref{eqmo2} in respect to $\s$ we obtain the KP equation in the form
\Ref{eq:KP_eqmotion}.

Note that the equations of motion \Ref{eq:KP_eqmotion} are invariant under the Galilei
transform \newline $x:=x+2vt$, $\s:=\s+vx+v^2t$.
The infinitesimal Galilei boost
\beq
    B=\int_{0}^{2\pi} \frac{\rd\s}{2\pi}\int_{-\infty}^{\infty} \rd x\,
      x h_0(\s,x),\qquad
    \{\phi,B\}=x\phi_\s
\eeq
commutes with the Hamiltonians as follows:
\beq
    \{H_p,B\}=-pH_{p-1}.
\eeq

\section{Quantisation}\label{sec:quantisation} 

Using the correspondence principle $[\cdot,\cdot]\simeq \ri\hbar\{\cdot,\cdot\}$
and setting $\hbar=1$ we obtain from \Ref{phi-pb} the commutation relations for the field $\phi(\s,x)$:
\beq\label{comm-phi}
   [\phi(\s,x),\phi(\tau,y)]=2\pi\ri\d'(\s-\tau)\d(x-y),\qquad \phi^\dagger=\phi.
\eeq

The corresponding Fourier components
\beq
    \phi(\s,x)=\sum_{n\in\Zbbd} a_n(x)\,\re^{-\ri n\s},\qquad
    a_n(x)=\int_0^{2\pi}\frac{\rd\s}{2\pi}\,\phi(\s,x)\,\re^{\ri n\s},\qquad
    a_0(x)=0,
\eeq
form the Heisenberg (oscillator) Lie algebra
\cite{kac1987bombay}
\beq\label{heisenberg algebra}
    [a_m(x),a_n(y)]=m\d_{m+n,0}\d(x-y),\qquad
    a_n^\dagger(x)=a_{-n}(x).
\eeq

Consider the highest-weight (h.w.) module generated
by the h.w.\ vector (vacuum) $\ket{0}$ such that
\beq
    a_n(x)\ket{0}=0,\qquad n>0.
\eeq

Equivalently, the h.w.\ module is isomorphic to the bosonic Fock space $\Fcal$ generated by the
canonical creation/annihilation operators $\Psid_n(x)$ and $\Psi_n(x)$
\beq
     \Psi_n(x)=n^{-1/2} a_n(x),\qquad \Psi_n^\dagger(x)=n^{-1/2} a_{-n}(x),\qquad n\in\Nbbd,\quad x\in\Rbbd,
\eeq
\beq\label{commPsi}
[\Psi_m (x),\Psid_n(y)]=\delta_{mn}\delta(x-y), \quad \Psi_m(x)\ket{0}=0,
\qquad m,n\in\Nbbd, \quad x,y\in\Rbbd.
\eeq

Our quantisation prescription for the Hamiltonians $H_0$, $H_1$ and $H_2$
is to take the classical expressions \Ref{eq:classical_hamiltonian_densities},
replace $\phi$ with the quantum operators
and apply the Wick normal ordering: $\Psi^\dagger$ to the left, $\Psi$ to the right.
The result is
\begin{align}
   \Hbf_0&=\int_{0}^{2\pi} \frac{\rd\s}{2\pi}\int_{-\infty}^{\infty} \rd x\,
        :\frac12\,\phi^2(\s,x):\
      =\sum_{m\in\Nbbd}\,m\,\int_{-\infty}^\infty \rd x\, \Psi_m^\dagger(x)\Psi_m(x), \label{H0-Psi}\\
   \Hbf_1&=\int_{0}^{2\pi} \frac{\rd\s}{2\pi}\int_{-\infty}^{\infty} \rd x\,
        :\frac12\,\bigl(\dd_\s^{-1}\phi\bigr)\bigl(\dd_x\phi\bigr):\
      =-\ri\sum_{m\in\Nbbd}\,\int_{-\infty}^\infty \rd x\, \Psi_m^\dagger(x)\,\dd_x\Psi_m(x), \label{H1-Psi} \\
   \Hbf_2&=\int_{0}^{2\pi} \frac{\rd\s}{2\pi}\int_{-\infty}^{\infty} \rd x\,
        :\frac12\,\bigl(\dd_\s^{-2}\phi\bigr)\bigl(\dd_x^2\phi\bigr)+\frac{\b}{3}\phi^3
        +\frac{\g}{2}(\dd_\s\phi)^2: \notag\\
   &=-\sum_{m\in\Nbbd}\,\frac{1}{m}\,\int_{-\infty}^\infty \rd x\, \Psi_m^\dagger(x)\,\dd_x^2\Psi_m(x)\notag\\
   &+\sum_{m_1,m_2\in\Nbbd} \beta_{m_1m_2}\int_{- \infty}^{\infty} \rd x \,
   \bigl[ \Psi_{m_1 + m_2}^\dagger (x) \, \Psi_{m_1} (x) \, \Psi_{m_2} (x)
+ \, \Psi_{m_1}^\dagger (x) \, \Psi_{m_2}^\dagger (x) \, \Psi_{m_1 + m_2} (x) \bigr] \notag\\
  &+\sum_{m\in\Nbbd} \g_m \int_{-\infty}^\infty \rd x\, \Psi_m^\dagger(x)\Psi_m(x), \label{H2-Psi}
\end{align}
where
\beq\label{def-betam1m2gammam}
     \b_{m_1m_2}=\b_{m_2m_1}=\b\,\sqrt{(m_1+m_2)m_1m_2},\qquad
    \g_m=\g m^3.
\eeq

As in the classical case, $\Hbf_0$ and $\Hbf_1$ being, respectively, generators
of $\s$- and $x$-translations, commute between themselves and with $\Hbf_2$.
The quantum Galilei boost
\beq
   \Bbf=\int_{-\infty}^\infty \rd x\,x\,:\frac12\,\phi^2(\s,x):
   =\sum_{m\in\Nbbd}\,m\,\int_{-\infty}^\infty \rd x\, x\Psi_m^\dagger(x)\Psi_m(x)
\eeq
commutes with the Hamiltonians as follows:
\beq
    [\Hbf_0,\Bbf]=0,\qquad
    [\Hbf_1,\Bbf]=-\ri \Hbf_0,\qquad
    [\Hbf_2,\Bbf]=-2\ri \Hbf_1.
\eeq

Physically, the Hamiltonian $\Hbf_2$ describes a non-relativistic, Galilei-invariant
system of one-dimensional Bose-particles labelled by the integer index $m$ that can be interpreted as particle's mass.
The interaction is local. The cubic $\b$-terms describe
processes where 2 particles of masses $m_1$ and $m_2$ merge into one of mass $m_1+m_2$ and the respective splitting.
The unitary transformation $\Psi_m\mapsto-\Psi_m$, $\Psi_m^\dagger\mapsto-\Psi_m^\dagger$ simply changes the sign of
$\b$, so one may assume $\b\ge0$.
For $\b$=0 the fields decouple, and one gets the theory of free particles with masses $m$
and the rest energy $\g m^3$.

A model with such kind of interaction was first proposed in \cite{Lee1954}, and its variants and generalisations
under the general name `Lee model' were popular in 1950-60s as toy models in nuclear physics.
Our variant of the Lee model is distinguished on several counts: first, by being 1D, second, by using infinitely
many fields, and third, by the specific choice of coupling constants
\Ref{def-betam1m2gammam} that, as we are expecting, makes the theory integrable.
Other examples of integrable 1D models of Lee type that have been studied previously
include the $N$-waves model \cite{kulish1986quantum} and continuous magnet \cite{Sklyanin1988}.

The crucial question is thus whether the integrability of the theory is preserved in the quantum case.
One way of checking the integrability would be to construct higher commuting quantum Hamiltonians
$H_n$, $n\geq3$ for which the normal ordering prescription can not be expected to work.
Moreover, the problem of higher local quantum Hamiltonians
is notoriously difficult even in a much simpler case of the quantum nonlinear
Schr\"{o}dinger equation \cite{sklyanin1982quantum}: the higher Hamiltonians are known to be extremely
singular and do not have well-defined normal symbols
\cite{Case1984polynomial,DaviesKorepin1989,Gutkin1985conservation}.

As our test of integrability, we choose instead to construct an explicit formula
for the simultaneous eigenfunctions of
$\Hbf_0$, $\Hbf_1$ and $\Hbf_2$ by means of the coordinate Bethe Ansatz
and to show that the multiparticle $S$-matrices are factorised into 2-particles ones.

\section{Fock space}\label{sec:fock_space}


The canonical operators $\Psid_m(x)$ and $\Psi_m(x)$ are labelled by the pairs $(m,x)\in\Nbbd\times\Rbbd$.

It is convenient to treat the pair of labels as a single composite entity $\xi=(m,x)$, or
$\eta=(n,y)$. Denoting $\d_{\xi\eta}=\delta_{mn}\delta(x-y)$ we can thus rewrite \Ref{commPsi} as
\beq\label{commPsixi}
     [\Psi_\xi,\Psid_\eta]=\d_{\xi\eta},\qquad \Psi_\xi\ket{0}=0.
\eeq

The bosonic Fock space $\Fcal$ is decomposed into $N$-particle components spanned by the vectors
\beq\label{defketf}
    \ket{f}=\sum_{N=0}^\infty \frac{1}{N!}
    \sum_{\mbf\in\Nbbd^N} \int_{\Rbbd^N} \rd x_1\ldots \rd x_N\,
    f_N\binom{\mbf}{\xbf}\,
    \prod_{j=1}^N \Psid_{m_j}(x_j)\ket{0}
\eeq
defined in terms of the $N$-particle wave functions
\beq
        f_N\binom{\mbf}{\xbf}
        =f_N\begin{pmatrix}
             m_1, & m_2, &\ldots& m_N\\
             x_1, & x_2, &\ldots& x_N
            \end{pmatrix}
        =f_N(\xi_1,\ldots,\xi_N)=f_N(\xibf)
\eeq
depending on $N$ discrete indices
$\mbf=(m_1,\ldots,m_N)\in\Nbbd^N$ and $N$ continuous variables \mbox{$\xbf=(x_1,\ldots,x_N)\in\Rbbd^N$},
and symmetric with respect to permutations of the pairs \mbox{$\xi_i=(m_i,x_i)$}.
We shall use the notation
\beq
    (\ket{f})_N(\xibf) \equiv f_N(\xibf)
\eeq
to refer to the $N$-particle component of the vector $\ket{f}$.

Using the shorthand notation, one can rewrite \Ref{defketf} as
\beq\label{ketfxi}
    \ket{f}= \sum_{N=0}^\infty \frac{1}{N!}
    \int \rd\xibf^N f_N(\xibf) \prod_{j=1}^N \Psid_{\xi_j}\ket{0},
\eeq
where
\beq
    \int \rd\xibf^N = \sum_{\mbf\in\Nbbd^N} \int_{\Rbbd^N} \rd x_1\ldots \rd x_N.
\eeq

The norm of the vector $\ket{f}$ is
\beq
   \norm{f}^2=\langle f|\,f\rangle=
   \sum_{N=0}^\infty \frac{1}{N!}
   \int \rd\xibf^N\,
      \abs{f_N(\xibf)}^2.
\eeq

From \Ref{commPsixi} and \Ref{ketfxi} the action of the canonical operators on the $N$-particle wave function
can be computed easily:
\beq\label{eq:Psif2}
    \bigl(\Psi_\eta\,\ket{f}\bigr)_N(\xi_1,\ldots,\xi_N)
    =f_{N+1}(\xi_1,\ldots,\xi_N,\eta),
\eeq
or, simply
\beq\label{eq:Psif3}
    \bigl(\Psi_\eta\,\ket{f}\bigr)_N(\xibf)
    =f_{N+1}(\xibf,\eta).
\eeq

Respectively,
\beq\label{eq:Psidf4}
    \bigl(\Psid_\eta\,\ket{f}\bigr)_N(\xi_1,\ldots,\xi_N)
   = \sum_{j=1}^N \d_{\eta\xi_j}
     f_{N-1}(\xi_1,\ldots,\widehat{\xi_j},\ldots,\xi_N),
\eeq
where $\widehat{\xi_j}$ means omitting $\xi_j$.

From \Ref{eq:Psif3}, and \Ref{eq:Psidf4}
one easily derives the action of the Hamiltonians $\Hbf_0$, $\Hbf_1$ and $\Hbf_2$
on the state $\ket{f}$ in terms of its $N$-particle components.

From \Ref{H0-Psi} one computes that
\beq
    (\Hbf_0\ket{f})_N\binom{\mbf}{\xbf}
       =\abs{\mbf}\, f_N\binom{\mbf}{\xbf},\qquad
     \abs{\mbf}=m_1+\ldots+m_N,
\eeq
measuring thus the total mass of an $N$-particle system.

Similarly, from \Ref{H1-Psi} one derives that $\Hbf_1$ is the total momentum operator
(generator of infinitesimal translation):
\beq
   (\Hbf_1\ket{f})_N\binom{\mbf}{\xbf}
   =-\ri(\dd_{x_1}+\ldots+\dd_{x_N})f_N\binom{\mbf}{\xbf}.
\eeq

From \Ref{H2-Psi} the action of $\Hbf_2$ on $\ket{f}$ takes the form: 
\begin{multline}\label{H_2-N}
\bigl(\Hbf_2\ket{f}\bigr)_N\begin{pmatrix}
             m_1 &\ldots& m_N\\
             x_1 &\ldots& x_N
            \end{pmatrix}
   = -\left(\frac{1}{m_1}\,\dd_{x_1}^2+\ldots+\frac{1}{m_N}\,\dd_{x_N}^2\right)
   f_N\begin{pmatrix}
             m_1 &\ldots& m_N\\
             x_1 &\ldots& x_N
            \end{pmatrix}  \\
   +2\sum_{1\leq i_1<i_2\leq N} \b_{m_{i_1}m_{i_2}}
     f_{N-1}\begin{pmatrix}
             m_1 &\ldots& \widehat{m_{i_1}} &\ldots& \widehat{m_{i_2}}  &\ldots&  m_N, & m_{i_1}+m_{i_2}\\
             x_1 &\ldots& \widehat{x_{i_1}} &\ldots& \widehat{x _{i_2}} &\ldots& x_N, & x_{i_1}
            \end{pmatrix}
            \d(x_{i_1}-x_{i_2}) \\
   +\sum_{k=1}^N \sum_{\substack{n_1,n_2\in\Nbbd \\ n_1+n_2=m_k}} \b_{n_1n_2}
    f_{N+1}\begin{pmatrix}
             m_1 &\ldots& \widehat{m_k} &\ldots& m_N & n_1 & n_2\\
             x_1 &\ldots& \widehat{x_k} &\ldots& x_N & x_k & x_k
            \end{pmatrix} \\
   + (\g_{m_1}+\ldots+\g_{m_N})
   f_N\begin{pmatrix}
             m_1 &\ldots& m_N\\
             x_1 &\ldots& x_N
            \end{pmatrix}.
\end{multline}

As in \Ref{eq:Psidf4}, the hat marks omitted arguments. Due to the symmetry of the wave function,
the order of the arguments is irrelevant, so we put the
new arguments replacing the omitted ones at the end of the list.

Whereas the operators $\Hbf_0$ and $\Hbf_1$ preserve the number $N$ of particles,
the Hamiltonian $\Hbf_2$ does not do so, due to the exchange terms
$\Psi_{m_1 + m_2}^\dagger\Psi_{m_1}\Psi_{m_2}$ and
$\Psi_{m_1}^\dagger\Psi_{m_2}^\dagger\Psi_{m_1 + m_2}$.
However, since $\Hbf_2$ commutes with $\Hbf_0$,
it preserves the mass $M=m_1+\ldots+m_N$ instead.
The original quantum-field-theoretical model
splits thus into a series of quantum-mechanical ones restricted to
the eigenspaces $\Fcal_M$ of $\Hbf_0$ which we call mass-$M$ sectors.
By \Ref{H_2-N}, in each mass-$M$ sector the Hamiltonian $\Hbf_2$ is represented by a multicomponent differential
operator with singular (delta-function) coefficients.

\section{Structure of mass-$M$ sector}\label{sec:structure_of_mass-M_sector}

To describe the structure of the mass-$M$ sector of the Fock space in more details we shall need
a few definitions from combinatorics \cite{StanleyFomin_book}.

A \emph{composition} $\mbf$ of a nonnegative integer $M\in\Nbbd$ is defined  as
a sequence $\mbf=(m_1,\ldots,m_N)$ of $m_i\in\Nbbd$ such that $m_1+\ldots+m_N\equiv\abs\mbf=M$.
The number $N=\ell(\mbf)$ is called \emph{length} of the composition, and $M=\abs{\mbf}$ its
\emph{weight}. The number of compositions of $M$ equals $2^{M-1}$.

We introduce a partial order $\succ$ on the set of compositions: $\mbf\succ\widetilde\mbf$ means that
$\ell(\widetilde\mbf)=\ell(\mbf)-1$ and $\widetilde\mbf$ can be obtained from $\mbf$ by replacing an adjacent
pair $(m_i,m_{i+1})$ for some $i=1,\ldots,\ell(\mbf)-1$ with $m_i+m_{i+1}$, that is
\begin{align*}
    \mbf&=(m_1,\ldots,m_{i-1},m_i,m_{i+1},m_{i+2},\ldots,m_N),\\
    \widetilde\mbf&=(m_1,\ldots,m_{i-1},m_i+m_{i+1},m_{i+2},\ldots,m_N).
\end{align*}

The set of compositions $\mbf$ of $M$ becomes then an ordered graph, with
vertices $\mbf$ and arrows pointing from $\mbf$ to $\widetilde\mbf$ if $\mbf\succ\widetilde\mbf$.
The graph is topologically equivalent to an $(M-1)$-dimensional hypercube
having $2^{M-1}$ vertices and $(M-1)2^{M-2}$ edges, as exemplified by Fig.\ \ref{fig:hypercubes}.
The vertex $(1,\ldots,1)$ is the \textit{source}, having no predecessors.
Starting from it and travelling along the arrows
one can reach any point of the hypercube in a variety of ways, terminating at the
\textit{sink} $(M)$.

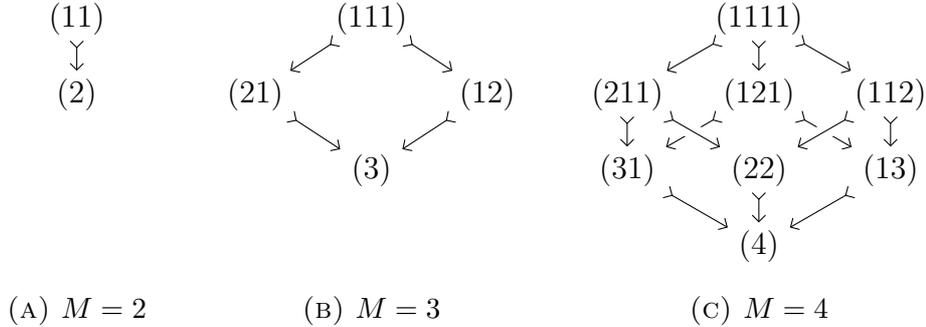
\begin{figure}[h]
\begin{subfigure}[b]{0.15\textwidth}
\centering
\begin{tikzpicture}[>=angle 90]
\matrix (m) [matrix of math nodes,
row sep=1em, column sep=1em,
text height=1.5ex,
text depth=0.25ex]{
(11) \\
(2) \\
\vphantom{(2)} \\
\vphantom{(2)} \\
};
\path[>->]
(m-1-1) edge (m-2-1);
\end{tikzpicture}
\caption{$M=2$}
\label{fig:hypecubes2}
\end{subfigure}
\
\begin{subfigure}[b]{0.3\textwidth}
\centering
\begin{tikzpicture}[>=angle 90]
\matrix (m) [matrix of math nodes,
row sep=1em, column sep=1em,
text height=1.5ex,
text depth=0.25ex]{
& (111) & \\
(21) &  & (12) \\
& (3) & \\
& \vphantom{(2)} & \\
};
\path[>->]
(m-1-2) edge (m-2-1)
(m-1-2) edge (m-2-3)
(m-2-1) edge (m-3-2)
(m-2-3) edge (m-3-2);
\end{tikzpicture}
\caption{$M=3$}
\end{subfigure}
\begin{subfigure}[b]{0.32\textwidth}
\centering
\begin{tikzpicture}[>=angle 90,
back line/.style={densely dotted},
cross line/.style={preaction={draw=white, -,
line width=6pt}}]
\matrix (m) [matrix of math nodes,
row sep=1em, column sep=1em,
text height=1.5ex,
text depth=0.25ex]{
& (1111) & \\
(211) & (121) & (112) \\
(31) & (22) & (13) \\
& (4) & \\
};
\path[>->]
(m-1-2) edge (m-2-1)
        edge (m-2-2)
        edge (m-2-3)
(m-2-1) edge (m-3-1)
(m-2-2) edge (m-3-1)
(m-2-1) edge [cross line] (m-3-2)
(m-2-2) edge (m-3-3)
(m-2-3) edge [cross line] (m-3-2)
(m-2-3) edge (m-3-3)
(m-3-1) edge (m-4-2)
(m-3-2) edge (m-4-2)
(m-3-3) edge (m-4-2)
;
\end{tikzpicture}
\caption{$M=4$}
\end{subfigure}

\caption{Composition hypercubes}
\label{fig:hypercubes}
\end{figure}

The mass-$M$ sector $\Fcal_M$ of our Fock space $\Fcal$ is the eigenspace of the mass operator $\Hbf_0$ corresponding to
the eigenvalue $M$. It is spanned by the vectors
\beq\label{eq:massM-ket}
      \ket{f} = \sum_{\mbf:\ \abs{\mbf}=M} \frac{1}{N!}\int_{\Rbbd^{N}} \rd x_1\ldots\rd x_N\,
    f_{N}\binom{\mbf}{\xbf}\,
    \prod_{j=1}^N \Psid_{m_j}(x_j)\ket{0}\in\Fcal_M,
\eeq
with the norm
\beq\label{norm-f-comp}
    \norm{f}^2= \sum_{\mbf:\ \abs{\mbf}=M} \frac{1}{N!}
    \int_{\Rbbd^{N}} \rd x_1\ldots\rd x_N\,
    \abs{f_{N}\binom{\mbf}{\xbf}}^2
\eeq
(here and below we always imply $N=\ell(\mbf)$).

The \emph{Weyl alcove} $\Wcal_N$ is defined as
\beq\label{eq:Weyl_alcove}
     \Wcal_N = \bigl\{ \xbf\in\Rbbd^N:\ x_1< x_2< \ldots < x_N  \bigr\}.
\eeq

Due to the symmetry of the wave functions, the terms $f_N(\xibf)$ contribute to the sum \Ref{eq:massM-ket}
with the multiplicity $N!$.
Consequently, one can replace the integration over $\Rbbd^N$
in \Ref{eq:massM-ket} and in \Ref{norm-f-comp}
with the integration over $\Wcal_N$, having adjusted the combinatorial coefficients:
\beq\label{eq:massM-ket-weyl}
      \ket{f} = \sum_{\mbf:\ \abs{\mbf}=M}
      \int_{\Wcal_{N}} \rd x_1\ldots\rd x_N\,
    f_{N}\binom{\mbf}{\xbf}\,
    \prod_{j=1}^{N} \Psid_{m_j}(x_j)\ket{0}\in\Fcal_M,
\eeq
\beq\label{norm-f-weyl}
    \norm{f}^2= \sum_{\mbf:\ \abs{\mbf}=M}
    \int_{\Wcal_{N}} \rd x_1\ldots\rd x_N\,
    \abs{f_{N}\binom{\mbf}{\xbf}}^2.
\eeq

As a result, $\Fcal_M$ splits into the orthogonal sum
\beq
     \Fcal_M=\bigoplus_{\mbf:\ \abs{\mbf}=M} \Fcal_M^{\,\mbf},
\eeq
where $\Fcal_M^{\,\mbf}\simeq L^2\big( \Wcal_{N} \big)$.

The vectors of $\Fcal_M$ are thus identified with the collection
of $2^{M-1}$ functions $f_N(\xibf)$ labelled by the compositions $\mbf$,
with arguments $\xbf\in\Wcal_N$. The component $f_N(\xibf)$
describes a collection of $N=\ell(\mbf)$ one-dimensional particles with masses $m_i$ and
coordinates $x_i$ ordered from left to right.

In the next section we shall rewrite the eigenvalue problem for the differential operator \Ref{H_2-N}
with delta-function coefficients on functions $f_N(\xibf)$ with $\xbf\in\Rbbd^N$
as an equivalent system of differential equations and boundary conditions for functions
$f_N(\xibf)$ with \mbox{$\xbf\in\Wcal_N$}.

\section{From $\d$-function to boundary conditions}\label{sec:from_delta_to_bc}  

Replacing a delta-function term with boundary conditions is a standard trick,
see e.g.\ \cite{BogoliubiovIzerginKorepinBookCorrFctAndABA,Gaudin2014BWf,LiebLiniger1963},
for the case of a scalar Bose-gas (quantum nonlinear Schr\"odinger equation).
We only need to adapt the technique to the case of particles of different masses.

Let us analyse first a simple two-particle example. Let a function $f(x_1,x_2)$
on $\Rbbd^2$ satisfy the
Schr\"odinger equation describing two particles of masses $m_1$ and $m_2$
and containing a singular inhomogeneous term (external source)
\beq\label{eq:Ef}
   \left[\left(-\frac{1}{m_1}\,\dd_{x_1}^2-\frac{1}{m_2}\,\dd_{x_2}^2\right)\right]f(x_1,x_2)
   +\s(x_1)\d(x_1-x_2)+\tau(x_1,x_2)=0,
\eeq
where the densities $\s(x)$ and $\tau(x_1,x_2)$ are assumed to be smooth functions.

Since $\d(x_1-x_2)$ vanishes off the diagonal, one obtains immediately the
differential equation ``in the bulk''
\beq\label{eq:bulk2}
    -\left(\frac{1}{m_1}\,\dd^2_{x_1}+\frac{1}{m_2}\,\dd^2_{x_2}\right)f(x_1,x_2)+\tau(x_1,x_2)=0, \qquad
    x_1\neq x_2.
\eeq

To derive the boundary conditions on the diagonal $x_1=x_2$,
assume that the function $f$ is piecewise smooth, meaning that it is
given by two different expressions
$f^{(+)}(x_1,x_2)$ in the half-plane $x_1-x_2>0$ and
$f^{(-)}(x_1,x_2)$ in the half-plane $x_1-x_2<0$. Furthermore, both functions $f^{(\pm)}$
are assumed to be smooth and defined in an open neighbourhood of the cut $x_1=x_2$,
the domain of each function extending thus beyond its native half-plane.

Introducing the step function
\beq
    \theta(x)=\left\{\begin{array}{rcl}
                   1, &\quad& x>0 \\
                   0, &\quad& x<0
               \end{array}\right.
\eeq
one can represent $f$ as
\beq\label{split-f}
    f(x_1,x_2)=f^{(+)}(x_1,x_2)\theta(x_1-x_2)+f^{(-)}(x_1,x_2)\theta(x_2-x_1)
\eeq
(we treat $f$ as a measurable function defining a distibution, so its values on the
zero-measure set $x_1=x_2$ are irrelevant and can be left undefined).

Substitute now \Ref{split-f} into \Ref{eq:Ef} and perform the differentiations, using
the identities valid for any smooth function $\om(x_1,x_2)$
$$
   \om(x_1,x_2)\d(x_1-x_2) = \om(x_1,x_1)\d(x_1-x_2),
$$
$$
   \om(x_1,x_2)\d'(x_1-x_2) = \bigl(\dd_{x_2}\om\bigr)(x_1,x_1)\d(x_1-x_2)+\om(x_1,x_1)\d'(x_1-x_2),
$$
so that the coefficients at $\theta$- and $\d$-functions in the resulting sum depend only on $x_1$.
The coefficients at $\theta(x_1-x_2)$ and $\theta(x_2-x_1)$ then give the ``bulk'' equation
\Ref{eq:bulk2} for $f^{(+)}$ and $f^{(-)}$, respectively. The coefficient at
$\d'(x_1-x_2)$ gives the continuity condition
\beq\label{eq:contfpm}
      f^{(+)}(x_1,x_1) = f^{(-)}(x_1,x_1).
\eeq

The coefficient at $\d(x_1-x_2)$ gives, after a simplification using \Ref{eq:contfpm},
the boundary condition
\beq\label{eq:jumpfpm}
     \left[\frac{1}{m_1}\dd_{x_1}(f^{(+)}-f^{(-)})
     +\frac{1}{m_2}\dd_{x_2}(-f^{(+)}+f^{(-)})\right](x_1,x_1)
     =\s(x_1).
\eeq

Let $g(x\pm0)$ denote the one-sided limiting values for $g(x\pm\eps)$ as $\eps\searrow0$.
Then, one can rewrite the continuity condition \Ref{eq:contfpm} as
\beq
     f(x+0,x-0) = f(x-0,x+0) \equiv f(x,x),
\eeq
and the jump-of-transversal-derivative condition \Ref{eq:jumpfpm} as
\begin{multline}\label{eq:jump2}
     -\bigl[(m_2\,\dd_{x_1}-m_1\,\dd_{x_2}) f \bigr](x+0,x-0)
    +\bigl[(m_2\,\dd_{x_1}-m_1\,\dd_{x_2}) f \bigr](x-0,x+0) \\
    +m_1m_2\,\sigma(x)=0.
\end{multline}

In \Ref{eq:jump2}, it is assumed that the differential operator is applied first to the function
of two variables $(x_1,x_2)$ and then the limit $\eps\searrow0$ is taken in
$(x_1,x_2)=(x\pm \eps,x \mp \eps)$.

The above argument works also in the multiparticle case since
in the neighbourhood of the cut $x_i=x_j$ the functions depend on the rest of the variables
continuously, and we can ignore all remaining variables.
Besides, when $x$'s are ordered as in \Ref{eq:Weyl_alcove}
the only jump conditions to take into account are those for the adjacent particles $x_i=x_{i+1}$.

Consider the eigenvalue problem $\Hbf_2\ket{f} = \la\ket{f}$.
Taking \Ref{H_2-N} and applying \Ref{eq:bulk2} we then obtain the set of bulk equations labelled by the vertices $\mbf$ of the hypercube
\begin{equation}\label{eq:bulk-gen}
\la f_N(\xibf)
   =
   \sum_{i=1}^N \left( -\frac{1}{m_i}\,\dd_{x_i}^2+\g_{m_i} \right)
   f_N(\xibf)   +  \sum_{\mbf'\succ\mbf} \b_{n_1n_2}
    f_{N+1}(\xibf^{\prime}).
\end{equation}

The sum  over the compositions $\mbf'$ such that $\mbf'\succ\mbf$ is in fact a double sum over the integers $k,n_1,n_2$ with $n_1+n_2=m_k$
that label the compositions and the associated vectors $\xibf, \xibf'$ as
\begin{align}
 \xibf & =  \begin{pmatrix} \mbf \\ \xbf  \end{pmatrix} =
	  \begin{pmatrix}
             m_1 &\ldots& m_{k-1} & m_k & m_{k+1} &\ldots& m_N \\
             x_1 &\ldots& x_{k-1} & x_k & x_{k+1} &\ldots& x_N
            \end{pmatrix},  \nonumber \\
  \xibf' & =  \begin{pmatrix} \mbf' \\ \xbf'  \end{pmatrix} =
\begin{pmatrix}
             m_1 &\ldots& m_{k-1} & n_1 & n_2 & m_{k+1} &\ldots& m_N \\
             x_1 &\ldots& x_{k-1} & x_k & x_k & x_{k+1} &\ldots& x_N
             \end{pmatrix}.
\label{compositions for bulk equation}
\end{align}

Note also that in \eqref{eq:bulk-gen} one does not need to distinguish the limits $x_k\pm0$ owing to  the continuity of $f_{N+1}$.

Respectively, \Ref{eq:jump2} produces the set of jump conditions labelled by the
arrows of the hypercube pointing from $\mbf$ that is the pairs
\be
   (m_1,\ldots,m_k,m_{k+1},\ldots,m_N)
   \rightarrow(m_1,\ldots,m_k+m_{k+1},\ldots,m_N),\qquad
   k=1,2,\ldots,N-1.
\ee

Applying \Ref{eq:jump2} to \Ref{H_2-N}
we get the jump of transversal derivative on the line $x_k=x_{k+1}$:
\begin{multline}\label{eq:jump-gen-raw}   
 -\bigl[(m_{k+1}\,\dd_{x_k}-m_k\,\dd_{x_{k+1}})
 f_N\bigr]\begin{pmatrix}
             m_1 &\ldots& m_{k-1} & m_k & m_{k+1} & m_{k+2} & \ldots & m_N\\
             x_1 &\ldots& x_{k-1} & x_{k+1}+0 & x_{k+1}-0 & x_{k+2} & \ldots & x_{N}
            \end{pmatrix} \\
 +\bigl[(m_{k+1}\,\dd_{x_k}-m_k\,\dd_{x_{k+1}})
 f_N\bigr]\begin{pmatrix}
             m_1 &\ldots& m_{k-1} & m_k & m_{k+1} & m_{k+2} & \ldots & m_N\\
             x_1 &\ldots& x_{k-1} & x_k-0 & x_k+0 & x_{k+2} & \ldots & x_{N}
            \end{pmatrix} \\
 +2m_km_{k+1}\b_{m_k,m_{k+1}} f_{N-1}\begin{pmatrix}
             m_1 &\ldots& m_{k-1} & \widehat{m_k} & \widehat{m_{k+1}}& m_{k+2} &\ldots & m_N &  m_k+m_{k+1}\\
             x_1 &\ldots& x_{k-1} & \widehat{x_k} & \widehat{x_{k+1}} & x_{k+2} & \ldots & x_{N} & x_k
            \end{pmatrix} = 0 .
\end{multline}
 To express the result
as a function of the arguments $(x_1<\ldots<\widehat{x_{k+1}}<\ldots< x_N)$
on a Weyl alcove
it remains to use the symmetry of $f_{N}$ and $f_{N-1}$ to swap $x_k\leftrightarrow x_{k+1}$ in the first term of \Ref{eq:jump-gen-raw}
and, respectively, to rearrange the $x$'s in the increasing order  in the last term.
The resulting final form of the jump condition can be recast in a compact form by using the compositions
$\mbf',\mbf''\succ\mbf$ of $\mbf$ and the associated vectors $\xibf', \xibf''$ and $\xibf$:
\begin{align}
   \xibf'&= \begin{pmatrix} \mbf' \\ \xbf' \end{pmatrix} 
	  =\begin{pmatrix}
             m_1 &\ldots& m_k & m_{k+1} & \ldots & m_N\\
             x_1 &\ldots& x_k-0 & x_k+0 & \ldots & x_{N}
            \end{pmatrix}  , \nonumber\\
    \xibf''&= \begin{pmatrix} \mbf'' \\ \xbf'' \end{pmatrix} 
	  =\begin{pmatrix}
             m_1 &\ldots& m_{k+1} & m_{k} & \ldots & m_N\\
             x_1 &\ldots& x_k-0 & x_k+0 & \ldots & x_{N}
            \end{pmatrix} ,  \nonumber \\
    \xibf&= \begin{pmatrix} \mbf \\ \xbf \end{pmatrix} 
	  =\begin{pmatrix}
             m_1 &\ldots& m_{k-1} & m_k+m_{i+1} & m_{k+2} &\ldots & m_N\\
             x_1 &\ldots& x_{k-1} & x_k & x_{k+2} & \ldots & x_{N}
            \end{pmatrix}  .
    \label{introduction compositions eqn saut}
\end{align}

\begin{multline}\label{eq:jump-gen}
\bigl[(m_k\,\dd_{x_k}-m_{k+1}\,\dd_{x_{k+1}}) f_N\bigr](\xibf'')
 +  \bigl[(m_{k+1}\,\dd_{x_k}-m_k\,\dd_{x_{k+1}}) f_N\bigr](\xibf')   \\
 +2m_km_{k+1}\b_{m_k,m_{k+1}} f_{N-1}(\xibf)  = 0.
\end{multline}

Note that the swapping $m_k\leftrightarrow m_{k+1}$ produces an identical equation.
Also, for \mbox{$m_k=m_{k+1}$} the first and the second term in \Ref{eq:jump-gen} are equal.

To conclude, the eigenvalue problem for the Hamiltonian $\Hbf_2$ in the sector of mass $M$
is now formulated in terms of a set of functions $f_N(\xibf)$ labelled by compositions $\mbf$ of $M$
with length $N=\ell(\mbf)$ defined on Weyl alcoves $\Wcal_N$.
The equations for $f_N(\xibf)$ are divided into two classes: the bulk differential equations
of 2nd order \Ref{eq:bulk-gen} labelled by the vertices $\mbf$ of the compositions hypercube,
and the jump conditions \Ref{eq:jump-gen} for the transversal derivatives
labelled by the edges of the compositions hypercube that correspond to the merging
$(m_k,m_{k+1})\rightarrow(m_k+m_{k+1})$ of the adjacent particles.

\section{Solution in the sector $M=2$}\label{sec:M2sln} 

The number $M=2$ admits two compositions: $1+1$ and $2$, see Fig. \ref{fig:hypecubes2}.
Respectively, the mass-2 sector splits as $\Fcal_2=\Fcal_2^{(11)}\oplus\Fcal_2^{(2)}$, so that any vector $\ket{f}\in \Fcal_2$ can be represented as
\beq
     \ket{f} = \int_{x_1<x_2} \rd x_1\rd x_2\,f_2\begin{pmatrix} 1 & 1 \\ x_1 & x_2 \end{pmatrix}
     \Psi_1^\dagger(x_1)\Psi_1^\dagger(x_2)\ket{0}
     +\int_{-\infty}^\infty \rd x_1\,f_1\begin{pmatrix} 2 \\ x_1 \end{pmatrix}
     \Psi_2^\dagger(x_1)\ket{0},
\eeq

\beq
     \norm{f}^2 = \int_{x_1<x_2} \rd x_1\rd x_2\,\left|f_2\begin{pmatrix} 1 & 1 \\ x_1 & x_2 \end{pmatrix}\right|^2
     +\int_{-\infty}^\infty \rd x_1\,\left|f_1\begin{pmatrix} 2 \\ x_1 \end{pmatrix}\right|^2.
\eeq

The general bulk equation \Ref{eq:bulk-gen} produces two bulk equations corresponding to the vertices
of the graph in Fig.\ \ref{fig:hypecubes2}:  
\begin{subequations}\label{M=2problem}
\begin{align}\label{bulk11}
    \la\, f_2\begin{pmatrix} 1 & 1 \\ x_1 & x_2 \end{pmatrix}
    &= (-\dd_{x_1}^2-\dd_{x_2}^2+2\g_1)f_2\begin{pmatrix} 1 & 1 \\ x_1 & x_2 \end{pmatrix},\\
\label{bulk2}
     \la\, f_1\begin{pmatrix} 2 \\ x_1 \end{pmatrix}
     &= \left(-\frac12\dd_{x_1}^2+\g_2\right)f_1\begin{pmatrix} 2 \\ x_1 \end{pmatrix}
     +\b_{11}\, f_2\begin{pmatrix} 1 & 1 \\ x_1 & x_1 \end{pmatrix}.
\end{align}

Respectively, equation \Ref{eq:jump-gen} produces the jump
condition corresponding to the single arrow $(11)\rightarrow(2)$ of the graph in Fig.\ \ref{fig:hypecubes2}:
\beq\label{jump11-2}
    2\bigl[(\dd_{x_1}-\dd_{x_2})f_2\bigr]\begin{pmatrix} 1 & 1 \\ x_1 & x_1 \end{pmatrix}
    +2\b_{11}\,f_1\begin{pmatrix} 2 \\ x_1 \end{pmatrix} = 0
\eeq
(the two terms with derivatives coincide due to the symmetry $m_1=m_2=1$).
\end{subequations}

In the spirit of Bethe Ansatz \cite{BogoliubiovIzerginKorepinBookCorrFctAndABA,Gaudin2014BWf},
we look for a solution of the boundary
problem \Ref{M=2problem} in the subsector $\mbf=(11)$ as a linear combination of plain waves:
the incoming one $\re^{\ri(u_2x_1+u_1x_2)}$ and the scattered one $\re^{\ri(u_1x_1+u_2x_2)}$,
with the scattering coefficient $S_{21}$:
\begin{subequations}\label{M=2ansatz}
\beq\label{M=2ansatzf2}
    f_2\begin{pmatrix} 1 & 1 \\ x_1 & x_2 \end{pmatrix}
    = \re^{\ri(u_2x_1+u_1x_2)}+S_{21}\re^{\ri(u_1x_1+u_2x_2)}, \qquad x_1<x_2.
\eeq

The jump condition \Ref{jump11-2} implies then that the wave function $f_1$
in the subsector $\mbf=(2)$  has to be the exponent $\re^{\ri(u_1+u_2)x_1}$,
up to a coefficient $R$:
\beq\label{M=2ansatzf1}
    f_1\begin{pmatrix} 2 \\ x_1 \end{pmatrix}
    = R\re^{\ri(u_1+u_2)x_1}.
\eeq
\end{subequations}

Substituting the Ansatz \Ref{M=2ansatz} into \Ref{M=2problem}
we obtain, respectively, the bulk-11 equation:
\begin{subequations}\label{M=2algeqs}
\beq\label{M=2lambda}
  u_1^2+u_2^2+2\g_1 = \la,
\eeq
the bulk-2 equation:
\beq
   \left(\frac12(u_1+u_2)^2+\g_2\right)\,R + \b_{11}(1+S_{21}) = \la R,
\eeq
and the jump $(11)\rightarrow(2)$ equation:
\beq
   \ri(u_2-u_1)+\ri(u_1-u_2)S_{21} + \b_{11}R = 0.
\eeq
\end{subequations}

The system of three linear equations \Ref{M=2algeqs} for $\la$, $S_{21}$, $R$
is easily solved. Equation \Ref{M=2lambda} gives immediately the value of $\la$, and the two remaining
equations produce the answer 
\beq\label{def S(u)}
     S_{21}= S(u_2-u_1), \qquad
     S(u) = -\frac{P(\ri u)}{P(-\ri u)}
\eeq
where $P$ is the cubic polynomial
\beq\label{eq:def-Pv}
    P(v) = v^3 + (2\g_2-4\g_1)v-2\b_{11}^2,
\eeq
or substituting $\b_{11}=\sqrt{2}\b$, $\g_1=\g$, $\g_2=8\g$
from \Ref{def-betam1m2gammam},
\beq\label{eq:defP(u)}
    P(v) = v^3 + 12\g\,v - 4\b^2.
\eeq

Respectively,
\beq
     R = \frac{4\ri\b_{11}u_{21}}{P(-\ri u_{21})} = \frac{4\sqrt{2}\ri\b u_{21}}{P(-\ri u_{21})},\qquad
     u_{21} \equiv u_2-u_1.
\eeq

As befits a Galilei-invariant theory, the $S$-matrix is invariant under
the simultaneous translations $u_a \mapsto u_a + c$, $a=1,2$.

Note that the sum of the zeroes of the cubic polynomial $P$ is 0
due to the absence of the quadratic term.
Since $P$ has real coefficients and negative free term $-4\b^2$,
it has exactly one positive root.
The two remaining zeroes lie in the left half-plane, and their
position is determined by the discriminant $D=-432(16\g^3+\b^4)$.
For $D<0$ they are complex-conjugated, and for $D>0$ they are both real negative,
as shown on Fig.\ \ref{fig:zeroes of P}.

It is tempting to call the case $D<0$, or $16\g^3+\b^4>0$ the quantum KP-I
equation, and $D>0$, or $16\g^3+\b^4<0$ the quantum KP-II equation.
Note that the boundary between the two cases is not $\g=0$ as in the classical case
but $16\g^3+\b^4=0$ when $P$ has a double negative zero, the term  $\b^4$ playing the role of a quantum correction.
It remains disputable what to call the \textit{dispersionless} quantum KP: either $16\g^3+\b^4=0$,
or $\g=0$ that corresponds to $P(u)=u^3-4\b^2$, the zeroes forming an equilateral triangle.

The corresponding scattering coefficient $S(u)$ given by \Ref{def S(u)}
is a rational function having three zeroes and three poles.
Their positions ($\circ$ for zeroes, $\bullet$ for poles), depending on $D$, are shown on Fig.\ \ref{fig:zerpolS}

\setlength{\unitlength}{0.2\textwidth}
\begin{figure}[h]
\begin{subfigure}[c]{0.4\textwidth}
\centering
\begin{picture}(2,1)(-1,-0.5)
\put(-1,0){\line(1,0){1.55}}
\put(0.65,0){\line(1,0){0.3}}
\put(0,-0.5){\line(0,1){1}}
\put(0.6,0){\circle{0.1}}
\put(-0.3,0.2){\circle{0.1}}
\put(-0.3,-0.2){\circle{0.1}}
\end{picture}
\caption{qKP-I: $D<0$}
\end{subfigure}
\qquad
\begin{subfigure}[c]{0.4\textwidth}
\centering
\begin{picture}(2,1)(-1,-0.5)
\put(-1,0){\line(1,0){0.55}}
\put(-0.35,0){\line(1,0){0.1}}
\put(-0.15,0){\line(1,0){0.70}}
\put(0.65,0){\line(1,0){0.3}}
\put(0,-0.5){\line(0,1){1}}
\put(0.6,0){\circle{0.1}}
\put(-0.2,0){\circle{0.1}}
\put(-0.4,0){\circle{0.1}}
\end{picture}
\caption{qKP-II: $D>0$}
\end{subfigure}

\caption{Zeroes of $P(u)$}
\label{fig:zeroes of P}
\end{figure}
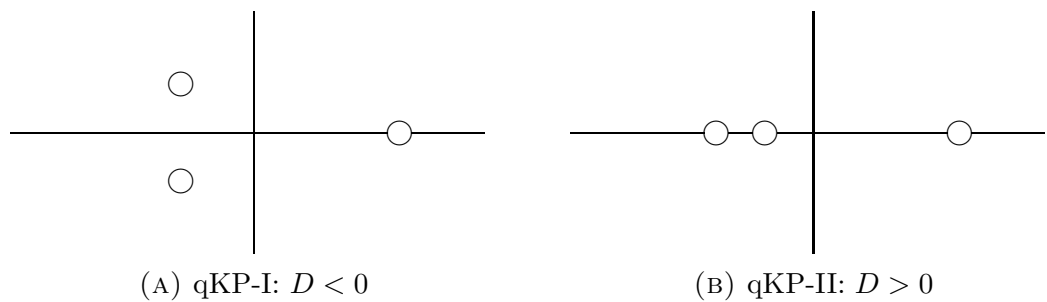

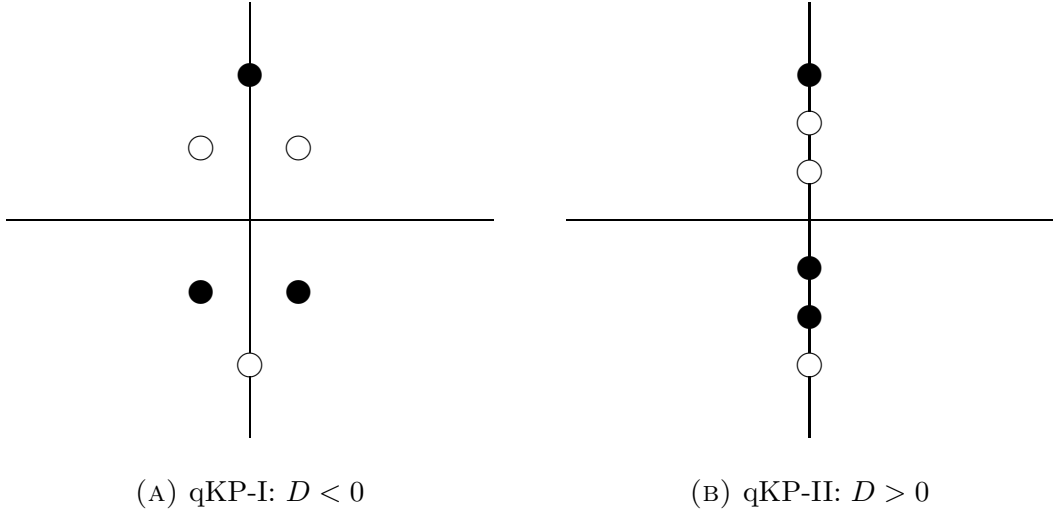
\begin{figure}[h]
\begin{subfigure}[c]{0.4\textwidth}
\centering
\begin{picture}(2,2)(-1,-1)
\put(-1,0){\line(1,0){2}}
\put(0,-0.9){\line(0,1){0.25}}
\put(0,-0.55){\line(0,1){1.45}}
\put(0,0.6){\circle*{0.1}}
\put(0,-0.6){\circle{0.1}}
\put(-0.2,0.3){\circle{0.1}}
\put(0.2,0.3){\circle{0.1}}
\put(-0.2,-0.3){\circle*{0.1}}
\put(0.2,-0.3){\circle*{0.1}}
\end{picture}
\caption{qKP-I: $D<0$}
\label{fig:zeroesP}
\end{subfigure}
\qquad
\begin{subfigure}[c]{0.4\textwidth}
\centering
\begin{picture}(2,2)(-1,-1)
\put(-1,0){\line(1,0){2}}
\put(0,-0.9){\line(0,1){0.25}}
\put(0,-0.55){\line(0,1){0.7}}
\put(0,0.25){\line(0,1){0.1}}
\put(0,0.45){\line(0,1){0.45}}
\put(0,0.6){\circle*{0.1}}
\put(0,-0.6){\circle{0.1}}
\put(0,-0.4){\circle*{0.1}}
\put(0,-0.2){\circle*{0.1}}
\put(0,0.4){\circle{0.1}}
\put(0,0.2){\circle{0.1}}
\end{picture}
\caption{qKP-II: $D>0$}
\end{subfigure}
\caption{Zeroes and poles of $S(u)$}
\label{fig:zerpolS}
\end{figure}

\section{Bethe Ansatz in sector $\Fcal_M^{(1\ldots1)}$}\label{sec:BAm=111} 

The known multiparticle integrable models share two common features: the absence of diffraction
(preservation of the asymptotic momenta of the particles after collision),
and the factorisation of the multiparticle $S$-matrix into
two-particles factors \cite{BogoliubiovIzerginKorepinBookCorrFctAndABA,Gaudin2014BWf}.
For the models with a delta-function interaction, like the quantum nonlinear Schr\"odinger model
\cite{berezin1964schrodinger,LiebLiniger1963}, such behaviour is manifested via
the coordinate Bethe Ansatz \cite{BogoliubiovIzerginKorepinBookCorrFctAndABA,Gaudin2014BWf},
or the assumption that the eigenfunction can be written as a sum of plane waves
with the coefficients differing by a two-particles $S$-factor when any pair of momenta permutes.

In this section, we shall describe the Bethe Ansatz for our model in the subsector
$\Fcal_M^{(1\ldots1)}$ of $\Fcal_M$ containing $M$ particles of unit mass
and corresponding to the composition $\mbf=(1,\ldots,1)$ of $M$.


Let $\Sfrak_{[1,M]}$ be the permutation group of $(1,\ldots,M)$. Let
$\ubf\equiv(u_1,\ldots,u_M)$ be the vector of momenta, and
$\vbf\equiv\ri\ubf$. Define the action of a permutation $\sfrak=(\sfrak_1,\ldots,\sfrak_M)\in\Sfrak_{[1,M]}$
on functions of $\vbf$ by substitutions $\sfrak:\,v_j\mapsto v_{\sfrak_j}$.
Then, for a plane wave
\beq
    \exp(\vbf\cdot\xbf)\equiv \exp\left( v_1 x_1+\ldots+v_M x_M\right)
\eeq
we have
\beq
    \sfrak\bigl(\exp(\vbf\cdot\xbf)\bigr) =
    \bigl(\exp(\sfrak(\vbf)\cdot\xbf)\bigr) =
    \exp\left( v_{\sfrak_1} x_1+\ldots+v_{\sfrak_M} x_M\right).
\eeq

We choose to normalise the Bethe wave function by making the coefficients at the plane waves polynomial
in $v_j$. Such a normalisation was proposed first for the quantum nonlinear Schr\"odinger equation in \cite{Gaudin2014BWf}, \textit{c.f.}\  Chapter 4, eq.\ (4.8),
see also \cite{BogoliubiovIzerginKorepinBookCorrFctAndABA}, Chapter 1, eq.\ (1.24).
Such a choice has the advantage of allowing for
algebraic manipulations with polynomials rather than rational functions.

\begin{conj}[Bethe Ansatz for $\mbf=(1,\ldots,1)$]\label{conj:BAm=111}
The $\Fcal_M^{(1\ldots1)}$ component of the eigenfunction of $\Hbf_2$ can be chosen
for $\xbf\in\Wcal_{1\ldots1}$ as
\beq\label{eq:BA111}
    f_M\begin{pmatrix} 1 & \ldots & 1 \\ x_1 & \ldots & x_M \end{pmatrix}=
    \sum_{\sfrak\in\Sfrak_{[1,M]}} \sgn(\sfrak)
    \left(\prod_{j<k} P(v_{\sfrak_k}-v_{\sfrak_j})\right)
    \sfrak\bigl(\exp(\vbf\cdot\xbf)\bigr),
\eeq
where $\sgn(\sfrak)$ is the sign of the permutation $\sfrak$, and
the polynomial $P(u)$ is given by \Ref{eq:defP(u)}.
\end{conj}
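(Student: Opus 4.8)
The plan is to verify that the proposed $f_M$ satisfies the two families of conditions derived in Section~\ref{sec:from_delta_to_bc}: the bulk equations \Ref{eq:bulk-gen} and the jump conditions \Ref{eq:jump-gen}, specialised to the composition $\mbf=(1,\ldots,1)$. First I would dispose of the bulk equation, which turns out to be automatic. Since no part of $\mbf=(1,\ldots,1)$ can be split into two positive integers, the sum $\sum_{\mbf'\succ\mbf}$ in \Ref{eq:bulk-gen} is empty and the bulk equation collapses to the free Schr\"odinger equation $\la f_M=\bigl(-\sum_i\dd_{x_i}^2+M\g_1\bigr)f_M$. As $\dd_{x_i}^2$ multiplies each plane wave $\sfrak\bigl(\exp(\vbf\cdot\xbf)\bigr)$ by $v_{\sfrak_i}^2=-u_{\sfrak_i}^2$, and $\sum_i u_{\sfrak_i}^2=\sum_j u_j^2$ is permutation-invariant, \emph{every} term of \Ref{eq:BA111} is an eigenfunction with the common eigenvalue $\la=\sum_{j=1}^M u_j^2+M\g$, irrespective of the coefficients. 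Thus the bulk equation holds for any coefficients and merely fixes $\la$.

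The substance lies in the jump conditions, and the first step is to record the key algebraic property of the coefficients $C_\sfrak=\sgn(\sfrak)\prod_{j<k}P(v_{\sfrak_k}-v_{\sfrak_j})$. If $\sfrak'=\sfrak\cdot(l,l+1)$ differs from $\sfrak$ by an adjacent transposition, then all factors of the product are either unchanged or exchanged in pairs, except the single factor attached to positions $l,l+1$, which passes from $P(v_{\sfrak_{l+1}}-v_{\sfrak_l})$ to $P(v_{\sfrak_l}-v_{\sfrak_{l+1}})$; together with $\sgn(\sfrak')=-\sgn(\sfrak)$ this gives
\[
   \frac{C_{\sfrak'}}{C_\sfrak}=-\frac{P(v_{\sfrak_l}-v_{\sfrak_{l+1}})}{P(v_{\sfrak_{l+1}}-v_{\sfrak_l})}=S\bigl(u_{\sfrak_l}-u_{\sfrak_{l+1}}\bigr),
\]
with $S$ precisely the two-particle $S$-matrix \Ref{def S(u)} of Section~\ref{sec:M2sln}. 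In particular, for $M=2$ the ansatz \Ref{eq:BA111} reproduces, up to the overall factor $-P(v_1-v_2)$, the solution \Ref{M=2ansatzf2}, so the conjecture is an exact identity there.

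To establish the jump condition on the diagonal $x_k=x_{k+1}$ I would group the $M!$ terms of \Ref{eq:BA111} into $M!/2$ pairs related by the transposition $(k,k+1)$. Near this diagonal the remaining $M-2$ coordinates are spectators, so each pair factorises into a common spectator exponential times the two-particle expression of Section~\ref{sec:M2sln}, with the colliding pair carrying total momentum $u_{\sfrak_k}+u_{\sfrak_{k+1}}$. The two-particle jump relation \Ref{jump11-2}, already solved in Section~\ref{sec:M2sln} via the coefficient $R$, then fixes the transversal-derivative jump of $f_M$ and determines the merged-sector component $f_{M-1}$ term by term. Because in the all-$1$ sector the jump condition only constrains---and thereby \emph{defines}---$f_{M-1}$, rather than over-determining $f_M$ itself, no obstruction arises at this level.

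The hard part, and the reason the statement is a conjecture rather than a theorem, is not the all-$1$ sector in isolation but the downstream consistency of the entire tower of components. The functions $f_{M-1}$ produced above must in turn satisfy their own bulk and jump equations, coupling them to $f_{M-2}$ and to sectors containing particles of differing masses; showing that $f_M$ extends to a genuine simultaneous eigenfunction across all $2^{M-1}$ compositions is exactly the factorisation problem. My expectation is therefore that this is where the real difficulty concentrates, and indeed it is addressed in Sections~\ref{sec:BAgeneric}--\ref{sec:BAm=(M)}, where it is reduced to the single-particle sector $\Fcal_M^{(M)}$ and verified by computer algebra for $M\le 8$.
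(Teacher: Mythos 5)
Your proposal is correct and follows essentially the same route as the paper: the bulk equation \Ref{eq:bulk-gen} at the source vertex $\mbf=(1,\ldots,1)$ has an empty $\b$-sum and is satisfied term by term (fixing $\la=\sum_j u_j^2+M\g_1$), the adjacent-transposition ratio of the coefficients reproduces the two-particle $S$-matrix \Ref{def S(u)} so that the jump conditions \Ref{eq:jump-gen} act as recurrences defining $f_{M-1}$ rather than constraining $f_M$, and the genuine difficulty is the downstream consistency across all compositions. The paper likewise offers no proof of this statement --- it is left as a conjecture, with the consistency reduced via Conjectures \ref{conj:BAgeneric}--\ref{conj:Qsolution} to the sector $\Fcal_M^{(M)}$ and checked by computer algebra for $M\le8$ --- so your assessment of where the unproved content lies matches the paper exactly.
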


By construction, the Bethe wave function \Ref{eq:BA111} is antisymmetric in the momenta $\vbf$.
Note that the ratio of the coefficients for two plane waves in \Ref{eq:BA111} differing by a transposition of two adjacent momenta
$v_{\mathfrak{s}_j}$ and $v_{\mathfrak{s}_{j+1}}$ is $S(v_{\mathfrak{s}_j}-v_{\mathfrak{s}_{j+1}})$ due to \Ref{def S(u)}, as expected.

Two more conventional wave functions $f^{(\text{in})}$ and $f^{(\text{out})}$
having unitary factors at the plane waves are defined from
\beq
   f_M\begin{pmatrix} 1 & \ldots & 1 \\ x_1 & \ldots & x_M \end{pmatrix}
   =f^{(\text{in})}\,(-1)^{M(M-1)/2}\,\prod_{j<k} P(v_j-v_k)
   =f^{(\text{out})}\,\prod_{j<k} P(v_k-v_j).
\eeq

Assuming that $u_1<\ldots<u_M$ and $x_1<\ldots<x_M$, one can interpret
$\tfrak\bigl(\exp(\ri\ubf\cdot\xbf)\bigr)$,
with $\tfrak\equiv(M,\ldots,1)$,
as the incident wave, and
$\exp(\ri\ubf\cdot\xbf)$ as the outgoing scattered wave, corresponding to the ordering of the particles carrying
the momenta $u_i$ as $t\rightarrow-\infty$ or, respectively, $t\rightarrow+\infty$.

The function $f^{(\text{in})}$ is then normalised by the unit coefficient at
the incoming wave $\tfrak\bigl(\exp(\ri\ubf\cdot\xbf)\bigr)$,
and $f^{(\text{out})}$ by the unit coefficient at the outgoing wave $\exp(\ri\ubf\cdot\xbf)$.

The functions $f^{(\text{in})}$ and $f^{(\text{out})}$ differ only
by the factor (multiparticle $S$-matrix)
\beq
    f^{(\text{out})}=\Sbbd f^{(\text{in})},\qquad
    \Sbbd=\prod_{1\leq j<k\leq M} S(u_j-u_k)
\eeq
that is factorised into a product of the factors corresponding to all two-particle collisions,
in the spirit of Bethe Ansatz.

By antisymmetry in $\vbf$, the whole wave function \Ref{eq:BA111} can be restored from a single
term containing $\exp(\vbf\cdot\xbf)$.
Let $1\leq a<b\leq M$. For a subsegment $(a,\ldots,b)\subset(1,\ldots,M)$
define the polynomial
\beq\label{eq:defPbbd}
    \Pbbd_{[a,b]}(\vbf)\equiv \prod_{a\leq j<k\leq b} P(v_k-v_j)
\eeq
and the linear operator $\Pfrak_{[a,b]}$
acting on functions of $\vbf=(v_1,\ldots,v_M)$ by antisymmetrisation, with the weight factor $\Pbbd_{[a,b]}$,
in respect to  the group $\Sfrak_{[a,b]}$ of permutations of $(a,\ldots,b)$ acting on
$(v_a,\ldots,v_b)$

\beq\label{eq:defPoperator}
    \Pfrak_{[a,b]}:\ g(\vbf)\mapsto
    \sum_{\sfrak\in\Sfrak_{[a,b]}} \sgn(\sfrak)\;
    \Pbbd_{[a,b]}(\sfrak(\vbf))\, g(\sfrak(\vbf)).
\eeq

The $\Pfrak_{[a,b]}$ operator for a subsegment of $(1,\ldots,M)$ will be used only in the Appendices.
In the main text we use the abbreviation $\Pfrak\equiv\Pfrak_{[1,M]}$.

In terms of the operator $\Pfrak$, the formula \Ref{eq:BA111} for the Bethe function
simplifies to
\beq\label{eq:f111average}
   f_M\begin{pmatrix} 1 & \ldots & 1 \\ x_1 & \ldots & x_M \end{pmatrix} =
    \Pfrak\bigl(\exp(\vbf\cdot\xbf)\bigr).
\eeq

\section{Bethe Ansatz in generic sector}\label{sec:BAgeneric}

The jump conditions \Ref{eq:jump-gen} can be viewed as recurrence relations
allowing one to obtain the wave function $f_N(\xibf)$ by differentiating the wave functions $f_{N+1}(\xibf)$
corresponding to the preceding compositions (in the sense of the relation $\succ$).
Thus, starting from the source $\mbf=(1,\ldots,1)$ and travelling along the arrows of the composition graph
one can in principle obtain the wave functions for all the remaining compositions of $M$.
The problem is, however, that different paths produce, in principle, different expressions,
and one ends with a bunch of \textit{consistency conditions} for the wave function.
To show that the Bethe Ansatz works at all one has to prove that those conditions
have a joint solution.
Besides, there remain the bulk conditions \Ref{eq:bulk-gen} that also have to be verified.
The differentiation $\dd_{x_i}$ acting on the exponent in \Ref{eq:f111average}
are replaced by $v_i$, and the resulting consistency equations take the form of a (very overdetermined)
set of algebraic equations for the coefficients of the Bethe wave functions.

To any composition $\mbf=(m_1,\ldots,m_N)$ of $M$ of length $\ell(\mbf)=N$,
there corresponds a split of the sequence of the momenta $v_i$
\beq
     \vbf=(v_1,\ldots,v_M)=(\wbf_1^{\mbf};\ldots;\wbf_N^{\mbf})
\eeq
into the consecutive segments $\wbf_j^{\mbf}$ of respective length $m_j$, so that
\beq
     \wbf_j^{\mbf} = (v_{m_1+\ldots+m_{j-1}+1},\ldots,v_{m_1+\ldots+m_j}),\qquad
     j=1,\ldots,N
\label{definition vecteur omega j comp m}
\eeq
or
\beq
     (\wbf_j^{\mbf})_i = v_{m_1+\ldots+m_{j-1}+i},\qquad i = 1,\ldots,m_j.
\eeq

At the vertex $\mbf$, the original coordinates $(x_1,\ldots,x_M)$ merge into consecutive groups
of length $m_i$:
\beq
     (x_1,\ldots,x_M)\mapsto
     \Xbf^{\mbf}=(\overbrace{x_1,\ldots,x_1}^{m_1},\overbrace{x_2,\ldots,x_2}^{m_2},\ldots,\overbrace{x_N,\ldots,x_N}^{m_N}),
\eeq
and now we set $\xbf=(x_1,\ldots,x_N)$.

Let $\angles{\wbf}$ denote the sum of the components of a vector $\wbf$, e.g.\
$\angles{\vbf}=v_1+\ldots+v_M$. Let also
\beq
\Wbf^{\mbf} = (\angles{\wbf_1^{\mbf}},\ldots,\angles{\wbf_N^{\mbf}})
\label{definition vecteur W comp m}
\eeq
so that
\beq
     \Wbf^{\mbf}\cdot\xbf =\angles{\wbf_1^{\mbf}}x_1+\ldots+\angles{\wbf_N^{\mbf}}x_N,
\eeq
and
\beq
    \vbf\cdot\Xbf^{\mbf} = \Wbf^{\mbf}\cdot\xbf.  
\eeq

\begin{conj}[Bethe Ansatz for generic $\mbf$]\label{conj:BAgeneric}
The Bethe eigenfunction in the generic subsector $\Fcal_M^{\,\mbf}$
can be written in the form
\beq\label{eq:BAgeneric}
      f_N\begin{pmatrix}
             m_1 &\ldots& m_N\\
             x_1 &\ldots& x_N
            \end{pmatrix}
      = \Pfrak\bigl(Q^{\mbf}(\vbf)\,\exp(\Wbf^{\mbf}\cdot\xbf)\bigr),
\eeq
where $Q^{\mbf}(\vbf)$ is a polynomial in $\vbf$.
In particular, $Q^{(1\ldots1)}(\vbf)\equiv1$.
\end{conj}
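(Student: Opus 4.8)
The plan is to construct the polynomials $Q^{\mbf}(\vbf)$ by recursion along the arrows of the composition hypercube, starting from the source $(1,\ldots,1)$ and moving towards the sink $(M)$, using the jump conditions \Ref{eq:jump-gen} as the recurrence relation and the bulk equations \Ref{eq:bulk-gen} as auxiliary constraints. The base of the recursion is Conjecture \ref{conj:BAm=111}, which fixes $Q^{(1\ldots1)}(\vbf)\equiv1$ and the form \Ref{eq:f111average}. For the inductive step, assume that every composition $\mbf'$ of length $N$ already has a wave function of the asserted shape $\Pfrak\bigl(Q^{\mbf'}(\vbf)\,\exp(\Wbf^{\mbf'}\cdot\xbf)\bigr)$. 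Each arrow $\mbf'\succ\mbf$ merges an adjacent pair $(m_k,m_{k+1})$, and the associated jump condition \Ref{eq:jump-gen} expresses the length-$(N-1)$ component $f_{N-1}$ at $\mbf$ as a fixed first-order differential operator applied to the length-$N$ components at $\mbf'$ and its swap $\mbf''$, divided by the nonzero constant $2m_km_{k+1}\b_{m_k,m_{k+1}}$.

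The first key step is to verify that this recurrence preserves the ansatz. The essential observation is that, inside $\Pfrak$, differentiation turns into multiplication by a linear form in $\vbf$: acting with $\dd_{x_j}$ on $\exp(\Wbf^{\mbf'}\cdot\xbf)$ term by term and using $\Wbf^{\mbf'}=(\angles{\wbf_1^{\mbf'}},\ldots,\angles{\wbf_N^{\mbf'}})$ gives $\dd_{x_j}\Pfrak\bigl(g\bigr)=\Pfrak\bigl(\angles{\wbf_j^{\mbf'}}\,g\bigr)$. Hence the operator $(m_{k+1}\dd_{x_k}-m_k\dd_{x_{k+1}})$ becomes multiplication by the polynomial $m_{k+1}\angles{\wbf_k^{\mbf'}}-m_k\angles{\wbf_{k+1}^{\mbf'}}$; setting the two coordinates equal merges the segment sums, $\angles{\wbf_k^{\mbf'}}+\angles{\wbf_{k+1}^{\mbf'}}=\angles{\wbf_k^{\mbf}}$, so that the exponent collapses to $\Wbf^{\mbf}\cdot\xbf$ and $f_{N-1}$ is recast as $\Pfrak\bigl(\widetilde{Q}(\vbf)\,\exp(\Wbf^{\mbf}\cdot\xbf)\bigr)$ with $\widetilde{Q}$ a polynomial. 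Because the recurrence uses only first-order differentiation followed by division by a constant, and never division by a $\vbf$-dependent quantity, polynomiality is automatic and no spurious denominators arise. Each chosen chain of merges thus yields a candidate $Q^{\mbf}$.

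The crux, and the principal obstacle, is consistency. A fixed vertex $\mbf$ has in general several incoming arrows, and there are several directed paths from $(1,\ldots,1)$ to $\mbf$; all of these must produce the \emph{same} polynomial $Q^{\mbf}$, and moreover the bulk equations \Ref{eq:bulk-gen}, which relate the newly built $f_{N-1}$ to the already-known splittings of length $N$, must hold identically. Passing to the $\vbf$-variables and stripping the common antisymmetrisation $\Pfrak$, each such requirement becomes an identity between polynomials weighted by products of the cubic $P$ of \Ref{eq:defP(u)}, valid modulo the kernel of $\Pfrak$. The strategy is then to exploit the nested structure of the operators $\Pfrak_{[a,b]}$ of \Ref{eq:defPoperator}, together with Galilei invariance, which fixes the dependence on the total momentum, in order to \emph{factorise} the whole web of consistency equations and reduce it to the single, maximally merged relation in the sink sector $\mbf=(M)$.

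Finally one must exhibit an explicit polynomial $Q^{(M)}(\vbf)$ solving this reduced relation and check that it is compatible with every incoming path. I expect that a closed combinatorial formula for $Q^{(M)}$ can be divined from the low-mass data, but that neither its correctness nor the factorisation reduction admits a short uniform proof; both appear to have to be confirmed mass by mass. This is precisely where the argument becomes computational: the consistency is checked by computer algebra up to $M\le8$, as indicated in Section \ref{sec:BAm=(M)}, so that the statement is best regarded as a conjecture supported by this evidence rather than a theorem with a complete closed-form proof.
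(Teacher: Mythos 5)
Your proposal follows essentially the same route as the paper: the jump conditions \Ref{eq:jump-gen} as recurrences along the composition hypercube with $(1,\ldots,1)$ as the base case, differentiation under $\Pfrak$ becoming multiplication by $\angles{\wbf_j^{\mbf}}$, the web of path-consistency and bulk conditions reduced via the factorisation property (Conjecture \ref{conj:Qfactor}, justified in Appendix \ref{app:factorisation}) to the sink sector $\mbf=(M)$, and an explicit $Q^{(M)}$ checked by computer algebra for $M\leq8$. You also correctly recognise, as the paper does, that this does not constitute a proof and the statement remains a conjecture supported by that computational evidence.
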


If $\Pfrak:g\mapsto0$ for some function $g(\vbf)$ we shall say that $g(\vbf)$ is \textit{$\Pfrak$-reducible}
and write $g\Pequiv0$. If $g_1-g_2\Pequiv0$  we shall say that
$g_1$ and $g_2$ are \textit{$\Pfrak$-equivalent} and write $g_1\Pequiv g_2$.
As a consequence, all quantities under the sign of
$\Pfrak$ are defined only up to $\Pfrak$-equivalence.

Consider the bulk equation \Ref{eq:bulk-gen}.

Since the vertex $\mbf=(1,\ldots,1)$ is the source of the composition graph, \textit{c.f.}\ Fig.\ \ref{fig:hypercubes},
having no predecessors,
the corresponding bulk equation \Ref{eq:bulk-gen} contains no $\b$-terms and is obviously
satisfied by the Ansatz \Ref{eq:BA111} producing the eigenvalue $\la$ of $\Hbf_2$
\beq
     \la = -v_1^2-\ldots-v_M^2 +M\g_1.
\eeq

Let $K(\wbf)$ be the sum of the squares of the components of a vector $\wbf$.
Recalling that $\g_1=\g$, by \Ref{def-betam1m2gammam},
we have $\la = -K(\vbf)+M\g_1$ or, splitting the sum into groups of size $m_j$,
\beq
    \la = \sum_{j=1}^N \bigl(m_j\g-K(\wbf_j^{\mbf})\bigr).
\eeq

Each $\dd_{x_j}$ in \Ref{eq:bulk-gen} is replaced by $\angles{\wbf_j^{\mbf}}$.
Moving the $\la$ term into the right-hand-side we find that the coefficient in front of $f_N(\xibf)$ produces the factor
\beq
     \sum_{j=1}^N \widetilde{K}(\wbf_j^{\mbf}) +(m_j^3-m_j)\g,
\eeq
where we use the notation
\beq
     \widetilde{K}(\wbf_j^{\mbf}) = K(\wbf_j^{\mbf}) - \frac{\angles{\wbf_j^{\mbf}}^2}{m_j}
\eeq
for the ``kinetic energy'' of the cluster $\wbf_j^{\mbf}$ reduced w.r.t.\ the center-of-mass.
Note that $\widetilde{K}(\wbf_j^{\mbf})$ is invariant under translations
\mbox{$\wbf_j^{\mbf}\mapsto\wbf_j^{\mbf}+(c,\ldots,c)$},
as a manifestation of the Galilei invariance.

Upon summing up over the compositions  $\mbf'\succ\mbf$ introduced in \eqref{compositions for bulk equation}, the bulk equation \Ref{eq:bulk-gen} takes finally the form
\beq\label{eq:bulk-gen-Q} 
    \left[\left(\sum_{j=1}^N \widetilde{K}(\wbf_j^{\mbf}) +(m_j^3-m_j)\g\right)\,Q^{\mbf}(\vbf)
    +\sum_{\mbf'\succ\mbf} \b_{n_1n_2} Q^{\mbf'}(\vbf)\right]
    \exp(\Wbf^{\mbf}\cdot\xbf)
      \Pequiv0.
\eeq

After performing differentiations of the exponent, the jump equation \Ref{eq:jump-gen}
can be recast in terms of the compositions $\mbf', \mbf''$ of $\mbf$ introduced in \eqref{introduction compositions eqn saut}
\begin{multline}\label{eq:jump-gen-Q} 
\biggl[ V_{m_km_{k+1}}\big(\wbf_k^{\mbf}\big) \cdot Q^{\mbf'}(\vbf)
    +V_{m_{k+1}m_{k}}\big(\wbf_k^{\mbf}\big) \cdot Q^{\mbf''}(\vbf) \\
    +2m_km_{k+1}\b_{m_k,m_{k+1}} Q^{\mbf}(\vbf)
      \biggr]\exp(\Wbf^{\mbf}\cdot\xbf) \Pequiv0.
\end{multline}
Here, given an $n_1+n_2$-dimensional vector $\boldsymbol{u}=(u_1,\dots,u_{n_1+n_2})$, we have defined
\begin{equation}
    V_{n_1n_2}(\boldsymbol{u}) \, = \, n_2(u_1+\ldots+u_{n_1})-n_1(u_{n_1+1}+\ldots+u_{n_1+n_2}) \;.
\end{equation}

\begin{conj}[Factorisation property]\label{conj:Qfactor}
The polynomial $Q^{\mbf}(\vbf)$ in \Ref{eq:BAgeneric}
can be chosen, up to a $\Pfrak$-equivalent expression, in the factorised form
\beq\label{eq:Qfactor}
     Q^{\mbf}(\vbf) = \prod_{k=1}^N Q^{(m_k)}(\wbf_k^{\mbf}).
\eeq
\end{conj}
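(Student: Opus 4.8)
The plan is to induct on the weight $M$. Granting the statement for all smaller weights, the single-cluster polynomials $Q^{(m)}$ with $m<M$ are already fixed, and the only genuinely new datum at weight $M$ is the sink polynomial $Q^{(M)}$, obtained by solving the top sector $\mbf=(M)$ as in Section~\ref{sec:BAm=(M)}. I would then simply \emph{define} $Q^{\mbf}(\vbf):=\prod_{k=1}^{N}Q^{(m_k)}(\wbf_k^{\mbf})$ for every composition $\mbf$ of $M$ and verify that this candidate satisfies the bulk equations \Ref{eq:bulk-gen-Q} and the jump equations \Ref{eq:jump-gen-Q}. Because the coefficient $2m_km_{k+1}\beta_{m_km_{k+1}}$ in \Ref{eq:jump-gen-Q} is nonzero, those equations determine $Q^{\mbf}$ up to $\Pfrak$-equivalence; so exhibiting the factorised candidate as a solution is exactly the claim.

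The whole reduction rests on one localisation lemma for $\Pfrak=\Pfrak_{[1,M]}$. Put $M_j=m_1+\cdots+m_j$ and let $\Sfrak_{\mbf}=\prod_{j=1}^{N}\Sfrak_{[M_{j-1}+1,\,M_j]}$ be the parabolic subgroup permuting the momenta inside each cluster of $\mbf$. Noting that $\Pfrak(g\,\exp(\Wbf^{\mbf}\cdot\xbf))=\Alt_{\Sfrak_{[1,M]}}\bigl(\Pbbd_{[1,M]}\,g\,\exp(\Wbf^{\mbf}\cdot\xbf)\bigr)$ and using the elementary composition identity $\Alt_{\Sfrak_{[1,M]}}=\tfrac{1}{|\Sfrak_{\mbf}|}\Alt_{\Sfrak_{[1,M]}}\circ\Alt_{\Sfrak_{\mbf}}$, it suffices to antisymmetrise first over $\Sfrak_{\mbf}$. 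Since $\Wbf^{\mbf}$ depends on $\vbf$ only through the cluster sums $\angles{\wbf_j^{\mbf}}$, the exponential is $\Sfrak_{\mbf}$-invariant, whence $\Alt_{\Sfrak_{\mbf}}\bigl(\Pbbd_{[1,M]}\,g\,\exp(\Wbf^{\mbf}\cdot\xbf)\bigr)=\exp(\Wbf^{\mbf}\cdot\xbf)\,\widehat g$ with $\widehat g:=\sum_{\rho\in\Sfrak_{\mbf}}\sgn(\rho)\,\Pbbd_{[1,M]}(\rho(\vbf))\,g(\rho(\vbf))$. Thus $\widehat g\equiv0$ already forces $g\Pequiv0$. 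The decisive structural point is that the cross-cluster part of the weight — the product of $P(v_b-v_a)$ over all pairs $a<b$ lying in two \emph{distinct} clusters — is a product over \emph{all} such pairs and is therefore $\Sfrak_{\mbf}$-invariant. Factoring it out and using $\Pbbd_{[1,M]}=\bigl(\prod_{j=1}^{N}\Pbbd_{[M_{j-1}+1,M_j]}\bigr)\Pbbd^{\mathrm{inter}}$, one gets, for any factorised $g=\prod_{j}g_j(\wbf_j^{\mbf})$, the clean identity $\widehat g=\Pbbd^{\mathrm{inter}}(\vbf)\prod_{j=1}^{N}\Pfrak_{[M_{j-1}+1,\,M_j]}(g_j)$. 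Consequently a factorised $g$ is $\Pfrak$-reducible as soon as one of its cluster factors $g_k$ is reducible for its own within-cluster operator.

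With this lemma both families of equations localise cluster by cluster. In the jump equation \Ref{eq:jump-gen-Q} for the edge merging the $k$-th and $(k+1)$-th blocks, the spectator factors $\prod_{j\neq k}Q^{(m_j)}$ are common to $Q^{\mbf}$, $Q^{\mbf'}$ and $Q^{\mbf''}$, so the bracket equals $\bigl(\prod_{j\neq k}Q^{(m_j)}\bigr)\,\Phi(\wbf_k^{\mbf})$, where — after rewriting $Q^{\mbf'},Q^{\mbf''}$ as products on the active cluster — $\Phi$ is exactly the left-hand side of the jump equation of the single-cluster sector $(m_k+m_{k+1})$. By the lemma the whole expression is $\Pfrak$-reducible once $\Phi$ is reducible for the within-cluster operator, and that is precisely the single-cluster consistency: from the induction hypothesis when $m_k+m_{k+1}<M$, and from Section~\ref{sec:BAm=(M)} when $m_k+m_{k+1}=M$. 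The bulk equation \Ref{eq:bulk-gen-Q} localises identically: its bracket is a \emph{sum} over $k$ of terms $\bigl(\prod_{j\neq k}Q^{(m_j)}\bigr)\,b_k(\wbf_k^{\mbf})$, with $b_k$ the single-cluster bulk expression assembled from $\widetilde{K}(\wbf_k^{\mbf})$, $(m_k^{3}-m_k)\g$ and the splittings of the $k$-th block, and each summand is reducible by the same mechanism. The coefficients $V_{m_km_{k+1}}$, $\widetilde{K}$ and $\beta_{m_km_{k+1}}$ agree between the generic and single-cluster equations because each of them involves only the active cluster's momenta.

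The genuine obstacle is not this combinatorics but its single-cluster input. The localisation shows that \emph{every} generic consistency equation reduces to one in a sink sector $(m)$ with $m\le M$, but it says nothing about whether the sink sectors themselves are solvable; and the sink sector $(M)$ demands a single polynomial $Q^{(M)}$ satisfying \emph{simultaneously} the $M-1$ merge equations attached to the edges $(m,M-m)\to(M)$ together with the bulk equation. This is a heavily overdetermined algebraic system that no part of the factorisation argument can bypass; establishing that it is consistent is the crux, and is exactly the point that Section~\ref{sec:BAm=(M)} settles by computer algebra up to $M=8$. I would therefore expect the factorisation itself to follow cleanly from the localisation lemma, with all remaining difficulty concentrated in the existence of the sink polynomials $Q^{(M)}$.
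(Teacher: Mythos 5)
Your argument is essentially the paper's own: Appendix \ref{app:factorisation} proves exactly this reduction (Proposition \ref{prop:factorisation}), with your ``localisation lemma'' being the combination of Lemma \ref{lemma:symmfactor} (symmetric spectator factors pass through the within-cluster antisymmetriser) and Lemma \ref{lemma:subsegmentPreduc} (the inter-cluster part of $\Pbbd_{[1,M]}$ is invariant under the parabolic subgroup, so subsegment reducibility implies full reducibility), applied cluster by cluster to the bulk and jump brackets. Your closing observation is also the paper's: the factorisation reduces everything to the single-cluster equations \Ref{eq:bulk-QM} and \Ref{eq:jump-QM}, whose solvability is the genuinely hard, conjectural input settled only by computer algebra for $M\leq8$.
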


A justification of the above conjecture is presented in Appendix \ref{app:factorisation}.

\section{Bethe Ansatz in sector $\Fcal_M^{(M)}$}\label{sec:BAm=(M)}

Conjecture \ref{conj:Qfactor} suggests that it is
sufficient to analyze the consistency equations only for the compositions of unit length $\mbf=(M)$,
$N=1$. In this case,
the exponent
$$\exp(\Wbf^{\mbf}\cdot\xbf)=\exp((v_1+\ldots+v_M)x_1)$$
becomes completely symmetric and can be factored out from under $\Pfrak$.
The equations for $Q^{(M)}(\vbf)$ are thus purely polynomial.

For the bulk equation \Ref{eq:bulk-gen-Q} we have now
\be
     \mbf'=(n_1,n_2),\quad n_1+n_2=M,\quad \b_{n_1n_2}=\sqrt{n_1n_2M}\, \b
\ee
\be
    \wbf_1^{\mbf'}=(v_1,\ldots,v_m),\qquad \wbf_2^{\mbf'}=(v_{m+1},\ldots,v_M),
\ee
and the equation takes form
\begin{multline}\label{eq:bulk-QM} 
    \bigl(\widetilde{K}(\vbf)+(M^3-M)\g\bigr)\,Q^{(M)}(v_1,\ldots,v_M)\\
      +\b\sum_{\substack{n_1,n_2\geq1\\ n_1+n_2=M}} \sqrt{n_1n_2M}\,
      Q^{(n_1)}(v_1,\ldots,v_{n_1})\,Q^{(n_2)}(v_{n_1+1},\ldots,v_M)
      \Pequiv0.
\end{multline}

For the jump equation \Ref{eq:jump-gen-Q} we consider the compositions
\be
   \mbf'=(m_1,m_2),\qquad \mbf''=(m_2,m_1),\qquad \mbf=(m_1+m_2),
\ee
and denote $\vbf=(v_1,\dots, v_{m_1+m_2})$, what recasts the equation in the form

\begin{multline}\label{eq:jump-QM} 
       V_{m_1m_2} (\vbf)  \,
       Q^{(m_1)}(v_1,\ldots,v_{m_1})\,Q^{(m_2)}(v_{m_1+1},\ldots,v_{m_1+m_2}) \\
       +V_{m_2m_1} (\vbf) \,
      Q^{(m_2)}(v_1,\ldots,v_{m_2})\,Q^{(m_1)}(v_{m_2+1},\ldots,v_{m_1+m_2}) \\
     + 2m_1m_2\b_{m_1m_2}\,Q^{(m_1+m_2)}(v_1,\ldots,v_{m_1+m_2}) \Pequiv0.
\end{multline}

The equations \Ref{eq:bulk-QM} and \Ref{eq:jump-QM}
together with $Q^{(1)}(v)=1$
constitute
the complete set of conditions for the polynomials $Q^{(M)}(v_1,\ldots,v_M)$, $M=1,2,\ldots$,
defined up to $\Pfrak$-equivalence.
Extensive computer experiments have led us to the following explicit solution
to the equations \Ref{eq:bulk-QM} and \Ref{eq:jump-QM}.

\begin{conj}[Solution]\label{conj:Qsolution}
Set $Q^{(1)}(v)=1$ and for $M\geq2$ define
the polynomial $Q^{(M)}(\vbf)$
as the following homogeneous polynomials of degree $M-1$
\beq\label{eq:Qsolution} 
      Q^{(M)}(\vbf) = \frac{2\sqrt{M}}{M!(M-1)}(2\b)^{1-M}\,
      \sum_{1\leq i<j\leq M} (-1)^{j-i}\binom{M-1}{j-i-1}(v_i-v_j)^{M-1}
\eeq
invariant under translations $v_i\mapsto v_i+c$.
Then such $Q^{(M)}(\vbf)$
satisfy all the equations \Ref{eq:bulk-QM} and \Ref{eq:jump-QM}.
\end{conj}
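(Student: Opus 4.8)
The plan is to invoke the factorisation property (Conjecture~\ref{conj:Qfactor}), which reduces the entire system to the single-cluster equations \eqref{eq:bulk-QM} and \eqref{eq:jump-QM}, and then to verify those two families for the explicit polynomials \eqref{eq:Qsolution} by means of a calculus of $\Pfrak$-equivalence. The two families have a different character and I would treat them apart: I expect the jump relations \eqref{eq:jump-QM} to hold as \emph{exact} polynomial identities, whereas the bulk relations \eqref{eq:bulk-QM} hold only modulo $\Pfrak$, the coupling $\g$ entering solely through the weight $\Pbbd$.

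The decisive tool is a local reduction calculus. Writing $\d_k=v_{k+1}-v_k$ and factoring the weight as $\Pbbd=\Pbbd^{(k)}P(\d_k)$, where $\Pbbd^{(k)}$ is the product omitting the $(k,k+1)$ factor, one checks that $\Pbbd^{(k)}$ is invariant under the adjacent transposition $\tau_k:v_k\leftrightarrow v_{k+1}$. Since antisymmetrisation annihilates any $\tau_k$-symmetric function, this gives two rules: first, $P(v_k-v_{k+1})\,g\Pequiv0$ whenever $\tau_k g=g$, because $\Pbbd\,P(v_k-v_{k+1})=\Pbbd^{(k)}P(\d_k)P(-\d_k)$ is then $\tau_k$-symmetric; second, the exchange relation $\Alt\bigl(\Pbbd^{(k)}[P(\d_k)g+P(-\d_k)\tau_k g]\bigr)=0$ for arbitrary $g$. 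Decomposing $P(\d)+P(-\d)=-8\b^2$ and $P(\d)-P(-\d)=2\d(\d^2+12\g)$ exhibits precisely how the constants $\b$ and $\g$ are generated by the antisymmetrisation, and this is the only channel through which the $\g$-term of \eqref{eq:bulk-QM} can appear.

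For the jump equation I would first establish, by direct expansion of \eqref{eq:Qsolution}, the exact identity $V_{m_1m_2}(\vbf)\,Q^{(m_1)}Q^{(m_2)}+V_{m_2m_1}(\vbf)\,Q^{(m_2)}Q^{(m_1)}=-2m_1m_2\b_{m_1m_2}Q^{(m_1+m_2)}(\vbf)$, which I have confirmed for $M=2$ and for $(m_1,m_2)=(1,2)$. After removing the common $(2\b)^{1-M}$ and normalisation factors this becomes a purely combinatorial statement about $\widehat Q^{(M)}=\sum_{i<j}(-1)^{j-i}\binom{M-1}{j-i-1}(v_i-v_j)^{M-1}$: the two linear forms $V$ applied to the degree-$(M-2)$ products must telescope into the single degree-$(M-1)$ sum. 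I would prove this by induction on $M$, using an exponential generating representation in which $(v_i-v_j)^{M-1}$ is produced by a derivative and the weights $\binom{M-1}{j-i-1}$ are read off from $(1-x)^{M-1}$, so that the action of $V$ amounts to a shift of the summation index; any term surviving only modulo $\Pfrak$ is then removed by the first reduction rule.

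The bulk equation is where I expect the genuine difficulty. The mechanism is already transparent at $M=2$: there the left-hand side of \eqref{eq:bulk-QM} equals $-P(v_1-v_2)/(2\sqrt2\,\b)$, which is $\Pfrak$-reducible by the first rule with $g=1$. In general the strategy is to compute the residual $R^{(M)}:=\bigl(\widetilde K(\vbf)+(M^3-M)\g\bigr)Q^{(M)}+\b\sum_{n_1+n_2=M}\sqrt{n_1n_2M}\,Q^{(n_1)}Q^{(n_2)}$ and to display it as a sum $\sum_k P(v_k-v_{k+1})\,g_k$ with each $g_k$ a $\tau_k$-symmetric polynomial, so that $R^{(M)}\Pequiv0$ term by term. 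Here the $\d^3$, $12\g\d$ and $-4\b^2$ pieces of each $P$ are meant to reproduce, respectively, the degree-$(M+1)$ contribution from $\widetilde K\,Q^{(M)}$, the $\g$-term $(M^3-M)\g\,Q^{(M)}$, and the convolution $\b\sum Q^{(n_1)}Q^{(n_2)}$; the degrees and the powers of $\b$ match on both sides. The main obstacle is to carry out this decomposition uniformly in $M$: it demands a workable closed or generating form for $\widetilde K\,Q^{(M)}$ and for the weighted convolution, together with an explicit identification of the symmetric cofactors $g_k$, all performed against the non-symmetric weight $\Pbbd$. It is exactly this uniform bookkeeping that resists a short argument and that, in the present state of the problem, forces the term-by-term computer verification up to $M=8$.
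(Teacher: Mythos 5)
First, a point of context: the paper does not prove this statement either --- it is left as a conjecture, verified by computer algebra for $M\leq 8$ via the stronger Conjecture~\ref{conj:Qeqs23red} (that the jump residual is $3$-reducible and the bulk residual is a sum of a $2$-reducible and a $3$-reducible part), combined with Propositions~\ref{prop:2reducibility} and~\ref{prop:3reducibility}. Within that framing, several of your ingredients are sound: your $M=2$ computation of the bulk residual as $-P(v_1-v_2)/(2\sqrt{2}\,\b)$ is correct, and your first reduction rule ($P(v_k-v_{k+1})\,g\Pequiv 0$ for $\tau_k$-symmetric $g$) is precisely the paper's $2$-reducibility mechanism.

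However, the central claim on which your treatment of the jump equation rests is false. You assert the exact polynomial identity $V_{m_1m_2}\,Q^{(m_1)}Q^{(m_2)}+V_{m_2m_1}\,Q^{(m_2)}Q^{(m_1)}=-2m_1m_2\b_{m_1m_2}Q^{(m_1+m_2)}$. It holds for $M=2,3$ (the cases you checked), but fails already for $(m_1,m_2)=(2,2)$: at $(v_1,v_2,v_3,v_4)=(0,1,0,0)$ the two $V$-terms of \Ref{eq:jump-QM} vanish because $v_3=v_4$ makes $Q^{(2)}(v_3,v_4)=0$, while $2m_1m_2\b_{22}\,Q^{(4)}=32\b\cdot\frac{1}{144\b^{3}}\bigl(-\sum_{j-i=1}(v_i-v_j)^3+3\sum_{j-i=2}(v_i-v_j)^3-3(v_1-v_4)^3\bigr)$ evaluates to $\tfrac{2}{3\b^{2}}\neq 0$. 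So the jump residual is a genuinely nonzero polynomial that is only $\Pfrak$-reducible, and the induction you build on this exactness cannot start. The further difficulty is that your reduction calculus contains only the $2$-reducibility rule, whereas the paper's verification indicates that the jump residual (and part of the bulk residual) is killed by a different mechanism, $3$-reducibility: $(v_i-2v_{i+1}+v_{i+2})J$ with $J$ symmetric in $v_i,v_{i+1},v_{i+2}$ is $\Pfrak$-reducible, because $P_{i,i+1}-P_{i+1,i+2}=(v_{i,i+1}-v_{i+1,i+2})J$ with $J$ symmetric --- a property that singles out the cubic $P$ with vanishing quadratic term (Lemma~\ref{lemma:v12-v23} and Proposition~\ref{prop:3reducibility}). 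Without this second rule, and without splitting the bulk residual into a $2$-reducible plus a $3$-reducible piece rather than a purely $2$-reducible one, your scheme cannot reproduce even the finitely many cases that have been checked, let alone yield the uniform-in-$M$ proof that remains open.
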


Note that $Q^{(M)}(\vbf)$ do not contain coupling constants $\b$, $\g$
that are hidden inside the $\Pfrak$-operator.

Conjecture \ref{conj:Qsolution}
has been confirmed by means of computer algebra for $M\leq8$.
In fact, instead of verifying Conjectures \ref{conj:Qsolution} literally,
we have verified a stronger Conjecture \ref{conj:Qeqs23red},
see  Appendix \ref{app:2-3reducibility}.

\section{Discussion}\label{sec:discussion}

As a test of quantum integrability of the system, we have demonstrated consistency
of the Bethe Ansatz for $M\leq8$. This is a pretty convincing though not conclusive result.
A rigorous proof of Conjecture \ref{conj:Qsolution}, or superseding Conjecture \ref{conj:Qeqs23red}
remains an open problem. The $\Pfrak$ operator, and the notions of 2- and 3-reducibility introduced
in Appendix \ref{app:2-3reducibility} seem to be new combinatorial objects that might be of interest
for themselves.

An alternative way to establish quantum integrability could be provided through the
Algebraic Bethe Ansatz \cite{BogoliubiovIzerginKorepinBookCorrFctAndABA}
based on quantum Lax operator and $R$-matrix. That would also help to identify the underlying quantum algebra.
The work in this direction is in progress.

Except for $M=2$, we have not pursued a comprehensive study of the orthogonality and completeness
of the Bethe eigenfunctions, neither of the structure of bound states.
In the case of the quantum nonlinear Schr\"odinger equation (delta-function Bose gas)
it is known that the bound states of the quantum model correspond in the classical limit to the
solitons of the classical model \cite{KulishManakovFaddeev1976}. It would be interesting to study a similar
correspondence for the KP-model.

The model we study is associated with a cubic polynomial $P$ with zero sum of the roots \Ref{eq:def-Pv}
through which the two-particle $S$-matrix is expressed \Ref{def S(u)} and, in turn, the factorised
multiparticle $S$-matrix. The question arises what possible QFT models could be associated
with polynomials $P$ of higher degree, or without the restriction on the roots.
In \cite{KozlowskiSklyanin2013} the properties of Bethe equations associated with a
generic polynomial $P$ were studied in an abstract way, without clarifying the nature
of the corresponding QFT. In a recent paper \cite{Litvinov2013} a possible example of a model of that class
is proposed.

The model we study is nonrelativistic and Galilei invariant. It appears that it corresponds to
a nonrelativistic limit of a relativistic integrable model known as
affine $A_{N-1}$ Toda field theory \cite{arinshtein1979quantum,braden1990affine}
and given by the Lagrangian
\beq
 \Lcal = \frac12 \sum_{i=1}^N \dd_\mu\phi_i\cdot\dd^\mu\phi_i
 -\frac{2M^2}{\b^2} \sum_{i=1}^N
 \exp \left[ \frac{\b}{\sqrt{2}} (\phi_i-\phi_{i+1} ) \right].
\eeq

Indeed, the $S$-matrix for a pair of main particles of the Toda FT
is conjectured in \cite{arinshtein1979quantum} to be
\beq
  S_{11}(\theta) = \frac{ \sinh\Big( \frac{\theta}{2} + \frac{ \ri \pi }{ N } \Big)
                   \sinh\Big( \frac{\theta}{2} - \frac{ \ri \pi }{ N } + \ri\frac{b}{2}\Big)
                   \sinh\Big( \frac{\theta}{2} -  \ri\frac{b}{2}\Big) }
         {  \sinh\Big( \frac{\theta}{2} - \frac{ \ri \pi }{ N } \Big)
            \sinh\Big( \frac{\theta}{2} + \frac{ \ri \pi }{ N } - \ri\frac{b}{2}\Big)
            \sinh\Big( \frac{\theta}{2}  +  \ri\frac{b}{2}\Big)  }.
\eeq

Upon carrying out the rescaling
\beq
    \theta = \frac{2 \kappa^{-1} \pi  u }{ N } \qquad \text{and}  \qquad b = \frac{2 \kappa^{-1} \tau \pi  }{ N }
\eeq
and then sending $N\rightarrow +\infty$ one obtains the rational degeneration
\beq
       \lim_{N\rightarrow +\infty} S_{11}(\theta) =  \widetilde{S}_{11}(u) =
       \frac{ u^3 +  u \big( \kappa^2 + \tau^2 - \kappa\tau \big) - \ri  \kappa\tau(\kappa-\tau)  }
       {  u^3 +  u \big( \kappa^2 + \tau^2 - \kappa\tau \big) + \ri  \kappa\tau(\kappa-\tau)    }.
\eeq

Thus choosing $\tau$ and $\kappa$ such that $\kappa^2 + \tau^2 - \kappa\tau = -12 \g$ and $ \kappa\tau(\kappa-\tau)=4\b^2$ one obtains the scalar $S$-matrix of qKP, or more precisely,
qKP-II, since $0<b<2\pi/N$:
\beq
   \widetilde{S}_{11}(u)= S_{\text{qKP}}(u)=  \frac{ u^3-12 \g u -4\ri \b^2 }{ u^3-12 \g u  + 4\ri \b^2 }.
\eeq

\textbf{Acknowledgements:} We thank Beno\^{i}t Vicedo for valuable discussions at the initial stages of the project.
A.T.\ thanks the EPSRC for funding under the First Grant project EP/K014412/1, and the STFC for support under the Consolidated Grant project nr. ST/L000490/1.

\appendix
\section{}\label{app:factorisation}

In the appendices we shall use again the non-abbreviated notation $\Pbbd_{[a,b]}$ \Ref{eq:defPbbd}
and $\Pfrak_{[a,b]}$ \Ref{eq:defPoperator} for a subsegment $(a,\ldots,b)\subset(1,\ldots,M)$.

Let us state a couple of elementary properties of the operator $\Pfrak_{[a,b]}$.
It is assumed below that $\vbf=(v_1,\ldots,v_M)$.

\begin{lemma}\label{lemma:symmfactor}
If a function $F(\vbf)$ is $\Pfrak_{[a,b]}$-reducible, and
a function $G(\vbf)$ is symmetric under permutations $\Sfrak_{[a,b]}\subset\Sfrak_{[1,M]}$
then the product $F(\vbf)G(\vbf)$ is also $\Pfrak_{[a,b]}$-reducible.
\end{lemma}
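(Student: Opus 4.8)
The plan is to work directly from the definition \Ref{eq:defPoperator} of the operator $\Pfrak_{[a,b]}$. By hypothesis, $F(\vbf)$ is $\Pfrak_{[a,b]}$-reducible, meaning
\[
    \Pfrak_{[a,b]}(F) = \sum_{\sfrak\in\Sfrak_{[a,b]}} \sgn(\sfrak)\,
    \Pbbd_{[a,b]}(\sfrak(\vbf))\, F(\sfrak(\vbf)) = 0,
\]
and I want to conclude that $\Pfrak_{[a,b]}(FG)=0$ as well. The key structural fact to exploit is that $G$ is invariant under every permutation $\sfrak\in\Sfrak_{[a,b]}$, i.e.\ $G(\sfrak(\vbf))=G(\vbf)$ for all such $\sfrak$.

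\textbf{The main computation.}

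First I would write out $\Pfrak_{[a,b]}(FG)$ using the definition, replacing the function $g$ in \Ref{eq:defPoperator} by the product $FG$:
\[
    \Pfrak_{[a,b]}(FG) = \sum_{\sfrak\in\Sfrak_{[a,b]}} \sgn(\sfrak)\,
    \Pbbd_{[a,b]}(\sfrak(\vbf))\, F(\sfrak(\vbf))\, G(\sfrak(\vbf)).
\]
The decisive step is to use the $\Sfrak_{[a,b]}$-symmetry of $G$ to replace $G(\sfrak(\vbf))$ by $G(\vbf)$ inside every summand. Since $G(\vbf)$ no longer depends on $\sfrak$, it factors out of the sum:
\[
    \Pfrak_{[a,b]}(FG) = G(\vbf)\sum_{\sfrak\in\Sfrak_{[a,b]}} \sgn(\sfrak)\,
    \Pbbd_{[a,b]}(\sfrak(\vbf))\, F(\sfrak(\vbf))
    = G(\vbf)\,\Pfrak_{[a,b]}(F).
\]
By the reducibility hypothesis $\Pfrak_{[a,b]}(F)=0$, so the right-hand side vanishes identically, giving $\Pfrak_{[a,b]}(FG)=0$, i.e.\ $FG\Pequiv0$, as claimed.

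\textbf{Anticipated obstacle.}

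This argument is essentially a one-line manipulation, so there is no serious analytic or combinatorial obstruction. The only point requiring care is the verification that $G$'s symmetry is with respect to exactly the same group $\Sfrak_{[a,b]}$ that $\Pfrak_{[a,b]}$ antisymmetrises over — this is precisely the hypothesis, so the factoring of $G(\vbf)$ out of the sum is legitimate. I would make sure to emphasise that it is crucial that the symmetry group of $G$ and the antisymmetrisation group of $\Pfrak$ coincide (both being $\Sfrak_{[a,b]}$); were $G$ merely symmetric under some proper subgroup, the step of pulling $G(\vbf)$ out of the sum would fail, and the conclusion need not hold.
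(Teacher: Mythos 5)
Your proof is correct and follows exactly the paper's own argument: the $\Sfrak_{[a,b]}$-invariance of $G$ lets you pull $G(\vbf)$ out of the antisymmetrised sum in \Ref{eq:defPoperator}, leaving $G(\vbf)\,\Pfrak_{[a,b]}(F)=0$. Your write-up is simply a more explicit version of the one-line proof in the paper.
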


\begin{proof}
Since  $G(\vbf)$ is invariant under $\Sfrak_{[a,b]}$ it is factored out from the sum over $\Sfrak_{[a,b]}$
in \Ref{eq:defPoperator}.
\end{proof}

\begin{lemma}\label{lemma:subsegmentPreduc}
If a function $F(\vbf)$ is $\Pfrak_{[a,b]}$-reducible then $F(\vbf)$ is also $\Pfrak_{[1,M]}$-reducible.
\end{lemma}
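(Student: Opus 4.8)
The plan is to reduce the statement to Lemma \ref{lemma:symmfactor} by decomposing the full group $\Sfrak_{[1,M]}$ over its subgroup $\Sfrak_{[a,b]}$. First I would split the weight polynomial \Ref{eq:defPbbd} according to whether a pair of indices lies inside the subsegment, writing
\[
   \Pbbd_{[1,M]}(\vbf)=\Pbbd_{[a,b]}(\vbf)\cdot R(\vbf),\qquad
   R(\vbf)=\prod_{\substack{1\le j<k\le M\\ \{j,k\}\not\subset[a,b]}} P(v_k-v_j),
\]
so that $R$ collects exactly those factors of $\Pbbd_{[1,M]}$ that do not already occur in $\Pbbd_{[a,b]}$.

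The crucial observation is that $R$ is invariant under $\Sfrak_{[a,b]}$. To see this I would group the factors of $R$ by the index lying \emph{outside} $[a,b]$: for each fixed $\ell\notin[a,b]$ its contribution to $R$ is $\prod_{k\in[a,b]}P(v_k-v_\ell)$ when $\ell<a$, and $\prod_{j\in[a,b]}P(v_\ell-v_j)$ when $\ell>b$. In either case the product runs over \emph{all} indices of the subsegment and is therefore symmetric under permutations of $(v_a,\ldots,v_b)$; the remaining factors of $R$, having both indices outside $[a,b]$, are manifestly $\Sfrak_{[a,b]}$-invariant. Hence $R$ is $\Sfrak_{[a,b]}$-symmetric, and Lemma \ref{lemma:symmfactor} shows that $R(\vbf)F(\vbf)$ is $\Pfrak_{[a,b]}$-reducible whenever $F$ is.

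It then remains to reorganise $\Pfrak_{[1,M]}(F)$ over cosets. Choosing representatives $\sigma$ of the left cosets $\Sfrak_{[1,M]}=\bigsqcup_\sigma \sigma\,\Sfrak_{[a,b]}$, every permutation is written uniquely as $\sfrak=\sigma\tau$ with $\tau\in\Sfrak_{[a,b]}$. Using $\sgn(\sigma\tau)=\sgn(\sigma)\sgn(\tau)$, the substitution rule $(\sigma\tau)(\vbf)=\tau(\sigma(\vbf))$, and the factorisation of $\Pbbd_{[1,M]}$ above, the inner $\tau$-sum reassembles into a copy of $\Pfrak_{[a,b]}$ evaluated at the shifted argument $\sigma(\vbf)$, giving
\[
   \Pfrak_{[1,M]}(F)(\vbf)=\sum_\sigma \sgn(\sigma)\,\Pfrak_{[a,b]}\!\bigl(R\,F\bigr)\bigl(\sigma(\vbf)\bigr).
\]
Since $F$ is $\Pfrak_{[a,b]}$-reducible by hypothesis, the previous paragraph yields $\Pfrak_{[a,b]}(RF)=0$, so every summand vanishes and $\Pfrak_{[1,M]}(F)=0$, which is the claim.

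The only genuinely delicate points are the symmetry of $R$ — which hinges on the fact that, for each external index $\ell$, the factors coupling $v_\ell$ to the entire subsegment appear together and so form a symmetric product — and the bookkeeping of the permutation action in the coset sum, in particular the identity $(\sigma\tau)(\vbf)=\tau(\sigma(\vbf))$ that converts the inner sum into $\Pfrak_{[a,b]}(RF)$ evaluated at $\sigma(\vbf)$. Once the weight is split as above, neither step poses a real obstacle and the remaining manipulations are routine.
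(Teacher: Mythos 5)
Your proof is correct and follows essentially the same route as the paper: factor $\Pbbd_{[1,M]}=\Pbbd_{[a,b]}\cdot R$ with $R$ being $\Sfrak_{[a,b]}$-symmetric, apply Lemma \ref{lemma:symmfactor}, and split the sum over $\Sfrak_{[1,M]}$ into cosets of $\Sfrak_{[a,b]}$ so that the inner sum reproduces $\Pfrak_{[a,b]}(RF)=0$. You merely spell out two points the paper leaves implicit, namely the verification that the complementary factor is $\Sfrak_{[a,b]}$-symmetric and the coset bookkeeping, and both are done correctly.
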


\begin{proof}
The product $\Pbbd_{[1,M]}$ factorises as
$\Pbbd_{[1,M]} = \Pbbd_{[a,b]}\cdot \widetilde\Pbbd$ where the complementary factor
$\widetilde\Pbbd$ is  $\Sfrak_{[a,b]}$-symmetric,
hence the product $F(\vbf)\widetilde\Pbbd$ is $\Pfrak_{[a,b]}$-reducible,
by Lemma \ref{lemma:symmfactor}.

The sum over the group $\Sfrak_{[1,M]}$ can be rewritten as the double sum,
first over the subgroup $\Sfrak_{[a,b]}\subset\Sfrak_{[1,M]}$, then over the coset
$\Sfrak_{[1,M]}/\Sfrak_{[a,b]}$. The alternating sum over $\Sfrak_{[a,b]}$ with the weight
$\Pbbd_{[a,b]}$ then nullifies $F(\vbf)\widetilde\Pbbd$.
\end{proof}

Now we can justify  Conjecture \ref{conj:Qfactor}.

\begin{prop}\label{prop:factorisation}
Assume that a sequence of polynomials $Q^{(m)}(v_1,\ldots,v_m)$, $m=1,2,\ldots$ solve equations
\Ref{eq:bulk-QM} and \Ref{eq:jump-QM}. Then $Q^{\mbf}(\vbf)$ given by \Ref{eq:Qfactor}
solve equations \Ref{eq:bulk-gen-Q} and \Ref{eq:jump-gen-Q}.
\end{prop}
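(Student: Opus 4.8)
The plan is to deduce each generic-sector equation, block by block, from the corresponding one-cluster equation \eqref{eq:bulk-QM} or \eqref{eq:jump-QM}, using the two reducibility lemmas as the only structural input. Write $\mbf=(m_1,\ldots,m_N)$ and let block $k$ occupy the subsegment $[a_k,b_k]$ of $(1,\ldots,M)$, with $a_k=m_1+\ldots+m_{k-1}+1$ and $b_k=m_1+\ldots+m_k$, so that $\wbf_k^{\mbf}=(v_{a_k},\ldots,v_{b_k})$. The observation that makes everything run is that each cluster sum $\angles{\wbf_j^{\mbf}}$ — and hence the exponential $\exp(\Wbf^{\mbf}\cdot\xbf)$ — together with any product $\prod_{j\neq k}Q^{(m_j)}(\wbf_j^{\mbf})$ of factors attached to the other blocks, is invariant under the subgroup $\Sfrak_{[a_k,b_k]}$ of permutations acting inside block $k$. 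This is precisely the symmetry hypothesis required by Lemma \ref{lemma:symmfactor}.

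For the bulk equation I would substitute the factorised Ansatz \eqref{eq:Qfactor} into the bracket of \eqref{eq:bulk-gen-Q}. The covers $\mbf'\succ\mbf$ are exactly the compositions obtained by splitting one block $m_k$ into an adjacent pair $(n_1,n_2)$ with $n_1+n_2=m_k$; organising the sum over $\mbf'$ by which block is split, and regrouping the kinetic term $\sum_j(\widetilde K(\wbf_j^{\mbf})+(m_j^3-m_j)\g)$ by the same index, the whole bracket collapses to
\begin{equation*}
  \sum_{k=1}^N\Bigl(\prod_{j\neq k}Q^{(m_j)}(\wbf_j^{\mbf})\Bigr)\,E_k(\wbf_k^{\mbf}),\qquad
  E_k(\wbf_k^{\mbf})=\bigl(\widetilde K(\wbf_k^{\mbf})+(m_k^3-m_k)\g\bigr)Q^{(m_k)}(\wbf_k^{\mbf})
  +\!\!\sum_{\substack{n_1+n_2=m_k\\ n_1,n_2\geq1}}\!\!\b_{n_1n_2}\,Q^{(n_1)}Q^{(n_2)} .
\end{equation*}
Since $\b_{n_1n_2}=\b\sqrt{(n_1+n_2)n_1n_2}=\b\sqrt{m_kn_1n_2}$, each $E_k$ is literally the left-hand side of \eqref{eq:bulk-QM} with $M=m_k$, reindexed to the variables $\wbf_k^{\mbf}$; by hypothesis it is therefore $\Pfrak_{[a_k,b_k]}$-reducible. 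Multiplying by the $\Sfrak_{[a_k,b_k]}$-symmetric factor $\bigl(\prod_{j\neq k}Q^{(m_j)}\bigr)\exp(\Wbf^{\mbf}\cdot\xbf)$ preserves $\Pfrak_{[a_k,b_k]}$-reducibility by Lemma \ref{lemma:symmfactor}, and Lemma \ref{lemma:subsegmentPreduc} promotes it to $\Pfrak_{[1,M]}$-reducibility. Summing over $k$ gives \eqref{eq:bulk-gen-Q}.

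The jump equation I would treat in the same spirit, and it is in fact simpler since only the single merging block is involved. At the merged vertex $\mbf$ the distinguished block $k$ has mass $m_k+m_{k+1}$, and the covers $\mbf',\mbf''\succ\mbf$ split it into $(m_k,m_{k+1})$ and $(m_{k+1},m_k)$. Factoring $\prod_{j\neq k}Q^{(m_j)}(\wbf_j^{\mbf})$ out of the bracket of \eqref{eq:jump-gen-Q}, the remaining expression in the variables $\wbf_k^{\mbf}$ coincides termwise with the left-hand side of \eqref{eq:jump-QM} (taking $m_1=m_k$, $m_2=m_{k+1}$), the coefficients $V_{m_km_{k+1}}$, $V_{m_{k+1}m_k}$ and $2m_km_{k+1}\b_{m_k,m_{k+1}}$ matching directly. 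Hence it is $\Pfrak_{[a_k,b_k]}$-reducible by hypothesis, and the two lemmas once more upgrade the full product — including the $\Sfrak_{[a_k,b_k]}$-symmetric prefactor and exponential — to $\Pfrak$-reducibility, which is \eqref{eq:jump-gen-Q}.

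The only genuinely delicate part is the bookkeeping: one must verify that splitting an adjacent block is in bijection with the covers $\mbf'\succ\mbf$, that the regrouped kinetic and $\b$-terms assemble into exactly $E_k$ (respectively the jump bracket) with no stray factors, and that the weight $\b_{n_1n_2}=\b\sqrt{(n_1+n_2)n_1n_2}$ reproduces the $\sqrt{n_1n_2M}$ normalisation of the one-cluster equations. Once this matching is confirmed, the reduction through Lemmas \ref{lemma:symmfactor} and \ref{lemma:subsegmentPreduc} is automatic; all the substantive content — that the brackets $E_k$ and their jump analogues vanish modulo $\Pfrak$ — is carried by the hypothesis that the $Q^{(m)}$ solve \eqref{eq:bulk-QM} and \eqref{eq:jump-QM}.
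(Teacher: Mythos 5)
Your proposal is correct and follows essentially the same route as the paper's proof: substitute the factorised Ansatz, regroup the bulk and jump brackets block by block so that each reduces to the corresponding one-cluster equation \Ref{eq:bulk-QM} or \Ref{eq:jump-QM} in the variables $\wbf_k^{\mbf}$, and then apply Lemmas \ref{lemma:symmfactor} and \ref{lemma:subsegmentPreduc} to the $\Sfrak_{[a_k,b_k]}$-symmetric prefactor. If anything, you are slightly more explicit than the paper in citing Lemma \ref{lemma:subsegmentPreduc} for the passage from $\Pfrak_{[a_k,b_k]}$- to $\Pfrak_{[1,M]}$-reducibility, which the paper leaves implicit.
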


\begin{proof}

Given the composition $\mbf = (m_1,\ldots,m_{k-1},m_k,m_{k+1},\ldots,m_N)$, let $(a_i,\ldots,b_i)$ be the consecutive subsegments of length $m_i$ of the sequence $(1,\ldots,M)$,
or, explicitly,
\beq
     a_i=m_1+\ldots+m_{i-1}+1,\qquad b_i=m_1+\ldots+m_i,\qquad i=1,\ldots,N.
\eeq
By using the product structure of $Q^{\mbf}$ one can recast the below sum as
\begin{multline}
    \left[\left(\sum_{j=1}^N \widetilde{K}(\wbf_j^{\mbf}) +(m_j^3-m_j)\g\right)\,Q^{\mbf}(\vbf)
    +\sum_{\mbf':\ \mbf'\succ\mbf}  \b_{n_1n_2} Q^{\mbf'}(\vbf)\right]
    \text{e}^{\Wbf^{\mbf}\cdot\xbf} \\
\; = \; \sum_{j=1}^{N} \Bigg\{ \Big[ \widetilde{K}(\wbf_j^{\mbf}) +(m_j^3-m_j)\g \Big]  Q^{(m_j)}(\wbf_j^{\mbf})  \\
\; + \; \sum_{\substack{n_1+n_2 \\ =m_j}}\b_{n_1n_2} Q^{(n_1)}\big(w_{j,1}^{\mbf}, \dots, w_{j,n_1}^{\mbf} \big) Q^{(n_2)}\big(w_{j,n_1+1}^{\mbf}, \dots, w_{j,m_j}^{\mbf} \big)
\Bigg\} \cdot\text{e}^{\Wbf^{\mbf}\cdot\xbf} \cdot \prod_{\substack{a=1 \\ \not=j } }^{N}Q^{(m_j)}(\wbf_a^{\mbf}) \;.
\nonumber
\end{multline}
Here $\xbf=(x_1,\ldots,x_N)$, $\Wbf^{\mbf}$ are as defined  in \eqref{definition vecteur W comp m} while the vector $\wbf_j^{\mbf}$ introduced in \eqref{definition vecteur omega j comp m}
have components
$$
\wbf_j^{\mbf} \, = \, \big(w_{j,1}^{\mbf}, \dots, w_{j,m_j}^{\mbf} \big) \;.
$$
The product outside of the bracket is symmetric in respect to permutations of the coordinates of $\wbf_j^{\mbf}$, \textit{viz}. in respect to the action of the permutation group $\mathfrak{S}_{[a_j,b_j]}$.
By construction, the functions appearing inside of the brackets are $\Pfrak_{[a_j,b_j]}$-reducible.  Thus $Q^{\mbf}(\vbf)$ given by \Ref{eq:Qfactor}
solves the bulk equation \Ref{eq:bulk-gen-Q} in virtue of Lemma \ref{lemma:symmfactor}.

It thus remains to deal with the gluing conditions issuing from the jump of the transversal derivatives.
Here, we introduce the auxiliary compositions $\mbf'\succ\mbf$, $\mbf''\succ\mbf$
%
%
\begin{align*}
\mbf'&=(m_1,\ldots,m_{k-1},n_1,n_{2},m_{k+1},\ldots,m_N),\\
    \mbf''&=(m_1,\ldots,m_{k-1},n_{2},n_1,m_{k+2},\ldots,m_N)
\end{align*}
with $n_1+n_2=m_k$. Then, it holds
\begin{multline}\label{eq:jump-gen-Q} 
    \biggl[ V_{n_1n_{2}}\big(\wbf_k^{\mbf}\big) \cdot Q^{\mbf'}(\vbf)
    +V_{n_{2}n_{1}}\big(\wbf_k^{\mbf}\big) \cdot Q^{\mbf''}(\vbf)  + 2n_1 n_{2}\b_{n_1,n_{2}} Q^{\mbf}(\vbf)  \biggr]\cdot\text{e}^{\Wbf^{\mbf}\cdot\xbf} \\
 = \biggl[2n_1 n_{2}\b_{n_1,n_{2}} Q^{(m_k)}(\wbf_k^{\mbf})  \, + \,  V_{n_1n_{2}}\big(\wbf_k^{\mbf}\big) \cdot Q^{(n_1)}\big(w_{k,1}^{\mbf},\dots, w_{k,n_1}^{\mbf}\big) Q^{(n_2)}\big(w_{k,n_1+1}^{\mbf},\dots, w_{k,m_k}^{\mbf}\big) \\
 +V_{n_{2}n_{1}}\big(\wbf_k^{\mbf}\big) \cdot Q^{(n_2)}\big(w_{k,1}^{\mbf},\dots, w_{k,n_2}^{\mbf}\big) Q^{(n_1)}\big(w_{k,n_2+1}^{\mbf},\dots, w_{k,m_k}^{\mbf}\big)   \biggr]
 \cdot\text{e}^{\Wbf^{\mbf}\cdot\xbf} \cdot \prod_{\substack{a=1 \\ \not=k } }^{N}Q^{(m_j)}(\wbf_a^{\mbf}) \;.
\nonumber
\end{multline}
Again, the product outside of the bracket is symmetric in respect to permutations of the coordinates of $\wbf_k^{\mbf}$, \textit{viz}. $\mathfrak{S}_{[a_j,b_j]}$.
By construction, the functions appearing inside of the brackets are $\Pfrak_{[a_j,b_j]}$-reducible.  Thus $Q^{\mbf}(\vbf)$ given by \Ref{eq:Qfactor}
solves the jump of transversal derivative condition \Ref{eq:jump-gen-Q} in virtue of Lemma \ref{lemma:symmfactor}.

\end{proof}

\section{}\label{app:2-3reducibility}

Checking $\Pfrak$-equivalence of polynomials directly is difficult even with computer
since it involves summation over $M!$ permutations, which leads to the exponential growth of
the computational complexity with $M$.
When verifying Conjecture \ref{conj:Qsolution}, we checked in fact some
stronger conditions
that we call  $2$- and $3$-reducibility having the advantage of a polynomial complexity.

Let $P(v)$ be given by \Ref{eq:defP(u)} and  $\vbf=(v_1,\ldots,v_M)$.
Let $v_{ij}=v_i-v_j$, and $P_{ij}=P(v_i-v_j)$.
Assuming $M\geq2$, we shall say that a polynomial $F(\vbf)$ is \textit{$2$-reducible} and write $F\twored0$ if $F(\vbf)$
admits a decomposition
\beq\label{eq:def2reduction}
     F(\vbf) = \sum_{i=1}^{M-1} P_{i,i+1}\,G_i(\vbf)
\eeq
with some polynomials $G_i(\vbf)$ such that $G_i(\vbf)$ is symmetric under permutation $v_i\leftrightarrow v_{i+1}$
for each $i$.
Note that such a decomposition is not necessarily unique.

\begin{prop}\label{prop:2reducibility}
If $F$ is 2-reducible then $F$ is $\Pfrak_{[1,M]}$-reducible.
\end{prop}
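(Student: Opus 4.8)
The plan is to combine linearity of $\Pfrak_{[1,M]}$ with the two preceding lemmas, so that the whole statement collapses to a single, essentially trivial, two-element cancellation. A $2$-reduction writes $F(\vbf)=\sum_{i=1}^{M-1}P_{i,i+1}\,G_i(\vbf)$ with each $G_i$ symmetric under $v_i\leftrightarrow v_{i+1}$. Since $\Pfrak_{[1,M]}$ is linear, it suffices to prove that each individual summand $P_{i,i+1}\,G_i$ is $\Pfrak_{[1,M]}$-reducible; the conclusion $F\Pequiv0$ then follows by adding up.

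First I would fix $i$ and work with the two-element subsegment $[i,i+1]\subset(1,\ldots,M)$, for which $\Sfrak_{[i,i+1]}=\{\id,\tau_i\}$, where $\tau_i$ is the transposition swapping $v_i$ and $v_{i+1}$, and the weight reduces to the single factor $\Pbbd_{[i,i+1]}(\vbf)=P(v_{i+1}-v_i)$. The key observation is that the adjacent factor $P_{i,i+1}=P(v_i-v_{i+1})$ is by itself $\Pfrak_{[i,i+1]}$-reducible. Indeed, writing out the two-term antisymmetrised sum, the $\id$-contribution is $P(v_{i+1}-v_i)\,P(v_i-v_{i+1})$, while the $\tau_i$-contribution carries $\sgn(\tau_i)=-1$ and the swapped weight $\Pbbd_{[i,i+1]}(\tau_i\vbf)=P(v_i-v_{i+1})$ together with $P\big((\tau_i\vbf)_i-(\tau_i\vbf)_{i+1}\big)=P(v_{i+1}-v_i)$, so that
\[
   \Pfrak_{[i,i+1]}\big(P_{i,i+1}\big)
   = P(v_{i+1}-v_i)\,P(v_i-v_{i+1}) - P(v_i-v_{i+1})\,P(v_{i+1}-v_i) = 0 .
\]

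With this in hand the argument closes using the lemmas. Because $G_i$ is invariant under $\Sfrak_{[i,i+1]}$, Lemma \ref{lemma:symmfactor} (applied with $[a,b]=[i,i+1]$) promotes the $\Pfrak_{[i,i+1]}$-reducibility of $P_{i,i+1}$ to that of the product $P_{i,i+1}\,G_i$; then Lemma \ref{lemma:subsegmentPreduc}, with the same subsegment, upgrades $\Pfrak_{[i,i+1]}$-reducibility to $\Pfrak_{[1,M]}$-reducibility. Summing over $i=1,\ldots,M-1$ yields $F\Pequiv0$. There is no serious obstacle here: the only point requiring care is the bookkeeping of how the swap $\tau_i$ acts on the arguments of the single factor $P$ and on the weight $\Pbbd_{[i,i+1]}$, which is precisely what produces the sign-cancelled pair above. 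The essential mechanism is that the interaction factor attached to an adjacent pair is, up to the $\id$/$\tau_i$ pairing, odd under the antisymmetriser localised on that pair, exactly the feature that $2$-reducibility is designed to encode.
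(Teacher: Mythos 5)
Your proof is correct and follows essentially the same route as the paper: show that $P_{i,i+1}$ is $\Pfrak_{[i,i+1]}$-reducible (your explicit two-term cancellation is just the paper's observation that $P_{i,i+1}\Pbbd_{[i,i+1]}=P_{i,i+1}P_{i+1,i}$ is $\Sfrak_{[i,i+1]}$-symmetric, written out), then apply Lemma \ref{lemma:symmfactor} and Lemma \ref{lemma:subsegmentPreduc} and sum over $i$.
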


\begin{proof}
Note that $P_{i,i+1}$ is $\Pfrak_{[i,i+1]}$-reducible since
$P_{i,i+1}\Pbbd_{[i,i+1]}=P_{i,i+1}P_{i+1,i}$
is $\Sfrak_{[i,i+1]}$-symmetric, hence nullified by the antisymmetrisation.
Then, by Lemma \ref{lemma:symmfactor}, the $i$-th term in \Ref{eq:def2reduction}
is $\Pfrak_{[i,i+1]}$-reducible, hence $\Pfrak_{[1,M]}$-reducible, by Lemma \ref{lemma:subsegmentPreduc}.
\end{proof}

The property of $2$-reducibility is not always sufficient to prove the $\Pfrak$-reducibility,
and we shall also use the notion of \textit{$3$-reducibility} defined below.

\begin{lemma}\label{lemma:v12-v23}
The polynomial $v_{12}-v_{23}=v_1-2v_2+v_3$ is $\Pfrak_{[1,3]}$-reducible.
\end{lemma}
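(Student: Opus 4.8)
The plan is to compute $\Pfrak_{[1,3]}(g)$ directly for $g(\vbf):=v_{12}-v_{23}=v_1-2v_2+v_3$ and to show the resulting signed sum over $\Sfrak_{[1,3]}$ vanishes identically; only the three variables $v_1,v_2,v_3$ enter, the remaining $v_j$ being inert spectators. The device that renders the six-term sum manageable is to separate the odd and constant parts of the cubic, writing $P(v)=A(v)-4\b^2$ with $A(v)=v^3+12\g v$ \emph{odd}, so that $P(-v)=-A(v)-4\b^2$.

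First I would partition $\Sfrak_{[1,3]}$ into the three cosets of the order-two subgroup $\{\id,(13)\}$ generated by the reversal $w_0=(13)$, namely $\{\id,(13)\}$, $\{(12),(132)\}$ and $\{(23),(123)\}$. Within each coset the two permutations differ by $w_0$, whose action negates every difference $v_i-v_j$; since $g$ is invariant under $v_1\leftrightarrow v_3$, the two members of a coset yield the \emph{same} value $g(\sfrak(\vbf))$ but carry opposite signs, and their weights $\Pbbd_{[1,3]}(\sfrak(\vbf))$ differ only by $A\mapsto-A$ in each of the three arguments $v_{21},v_{31},v_{32}$. Expanding the weight as $\Pbbd_{[1,3]}=e_3+\delta\,e_2+\delta^2 e_1+\delta^3$ with $\delta=-4\b^2$ and $e_k$ the elementary symmetric functions of $A(v_{21}),A(v_{31}),A(v_{32})$, subtracting the two members of a coset cancels the parts even in $A$ (the $e_2$- and $\delta^3$-terms) and leaves precisely $2e_3+2\delta^2 e_1$, the contributions odd in $A$, with the same $e_3=A(v_{21})A(v_{31})A(v_{32})$ occurring for each coset.

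It then remains to assemble the three cosets. The common leading term $e_3$ is multiplied by the three values of $g(\sfrak(\vbf))$, namely $v_1-2v_2+v_3$, $-2v_1+v_2+v_3$ and $v_1+v_2-2v_3$; these sum to zero, so the $e_3$-contribution drops out. The surviving piece is $2\delta^2$ times a single explicit polynomial in $p:=v_{21}$ and $q:=v_{32}$ (recall $v_{31}=p+q$) that is linear in $A(p),A(q),A(p+q)$. I would check this polynomial vanishes by substituting $A(s)=s^3+12\g s$ and splitting it into its $\g$-proportional part, which cancels among the quadratic monomials in $p,q$, and its $\g$-free part, which cancels among the quartic monomials after expanding $(p+q)^3$. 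The main (and only genuine) obstacle is this concluding bookkeeping; conceptually the result is forced by two facts — the oddness of $A$, which annihilates the even-in-$A$ terms coset by coset, and the vanishing of the sum of the three shifted values of $g$, which annihilates the top-degree term. Hence $\Pfrak_{[1,3]}(g)=0$, i.e.\ $v_{12}-v_{23}\Pequiv0$.
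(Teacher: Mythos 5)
Your argument is correct, but it takes a genuinely different route from the paper's. You evaluate $\Pfrak_{[1,3]}(v_{12}-v_{23})$ head-on: writing $P(v)=A(v)-4\b^2$ with $A$ odd, pairing the six permutations into cosets of the reversal $(13)$ (under which $v_1-2v_2+v_3$ is invariant), so that the even-in-$A$ parts of the weight cancel within each coset; the surviving top term $A(v_{21})A(v_{31})A(v_{32})$ then drops out because the three coset values of $g$ sum to zero, and the remainder reduces to the explicit identity $(2p+4q)A(p)-(4p+2q)A(q)+2(q-p)A(p+q)=0$ in $p=v_{21}$, $q=v_{32}$, whose $\g$-linear and quartic parts indeed each cancel. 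The paper argues indirectly instead: $P_{12}$ and $P_{23}$ are each $\Pfrak$-reducible on their two-element segments (since $P_{i,i+1}\Pbbd_{[i,i+1]}$ is symmetric under the transposition), hence $P_{12}-P_{23}$ is $\Pfrak_{[1,3]}$-reducible by the subsegment lemma; the constant terms $-4\b^2$ cancel in this difference, which therefore factors as $(v_{12}-v_{23})J$ with $J$ a nonzero $\Sfrak_{[1,3]}$-symmetric quadratic, and pulling $J$ out of $\Pfrak_{[1,3]}$ forces $\Pfrak_{[1,3]}(v_{12}-v_{23})=0$. The paper's route avoids expanding the six-term sum altogether and directly explains the remark following the lemma (the symmetry of $J=(P_{12}-P_{23})/(v_{12}-v_{23})$ is precisely what singles out cubics with vanishing $v^2$ coefficient); your route is self-contained, relying on no prior lemmas, at the price of the concluding bookkeeping --- which you should display explicitly, since the step ``the same $e_3$ occurs for each coset'' silently uses that the sign of the coset representative compensates the sign flip of the product of the three $A$-values.
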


\begin{proof}
Note that $P_{12}$ is $\Pfrak_{[1,2]}$-reducible, and $P_{23}$ is $\Pfrak_{[2,3]}$-reducible,
as shown in the proof of Proposition \ref{prop:2reducibility}.
By Lemma \ref{lemma:subsegmentPreduc}, $P_{12}$, $P_{23}$, and therefore $P_{12}-P_{23}$
are $\Pfrak_{[1,3]}$-reducible.
Now note that the difference
\be
     P_{12}-P_{23} = v_{12}^3-v_{23}^3 + 12\g(v_{12}-v_{23})
\ee
factorises into $v_{12}-v_{23}$ and a quadratic polynomial $J$ that is
$\Sfrak_{[1,3]}$-symmetric:
\beq
     P_{12}-P_{23} = (v_{12}-v_{23})J ,
\eeq
\beq
     J  = v_{12}^2+v_{12}v_{23}+v_{23}^2+12\g = v_1^2+v_2^2+v_3^2-v_1v_2-v_1v_3-v_2v_3+12\g.
\eeq

Then from the symmetry of $J $ it follows that
\beq
    0=\Pfrak_{[1,3]}(P_{12}-P_{23})=\Pfrak_{[1,3]} \bigl((v_{12}-v_{23})J \bigr) = J \,\Pfrak_{[1,3]} (v_{12}-v_{23})
\eeq
and therefore $\Pfrak_{[1,3]} (v_{12}-v_{23}) =0$ since $J \neq0$.
\end{proof}

By Lemma \ref{lemma:subsegmentPreduc}, an immediate corollary is that
$v_{i,i+1}-v_{i+1,i+2}$ is $\Pfrak_{[1,M]}$-reducible for any $M$,
and $i=1,\ldots,M-2$.

Remarkably, the condition that
\beq
      J  = \frac{P_{12}-P_{23}}{v_{12}-v_{23}}
\eeq
is an $\Sfrak_{[1,3]}$-symmetric polynomial
fixes the polynomial $P(v)$ uniquely as a cubic polynomial with zero $v^2$-term.
The easiest way to prove this is to use the homogeneity and to check the monomials $v^p$ to see
that the solution is $p\in\{0,1,3\}$.

Assuming $M\geq3$, we shall say that a polynomial $F(\vbf)$ is \textit{$3$-reducible} and write $F\threered0$ if $F(\vbf)$
admits a decomposition
\beq\label{eq:def3reduction}
     F(\vbf) = \sum_{i=1}^{M-2} (v_{i,i+1}-v_{i+1,i+2})\,J_i(\vbf)
\eeq
with some $\Sfrak_{[i,i+2]}$-symmetric polynomials $J_i(\vbf)$.

Note that such a decomposition is not necessarily unique.

\begin{prop}\label{prop:3reducibility}
If $F$ is 3-reducible then $F$ is $\Pfrak_{[1,M]}$-reducible.
\end{prop}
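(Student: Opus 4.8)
The plan is to reproduce the argument of Proposition \ref{prop:2reducibility} essentially verbatim, replacing the length-two building block $P_{i,i+1}$ by the length-three building block $v_{i,i+1}-v_{i+1,i+2}$. By the linearity of the operator $\Pfrak_{[1,M]}$, it suffices to show that each individual summand $(v_{i,i+1}-v_{i+1,i+2})\,J_i(\vbf)$ occurring in the decomposition \Ref{eq:def3reduction} is $\Pfrak_{[1,M]}$-reducible; summing over $i$ then finishes the proof.

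First I would observe that Lemma \ref{lemma:v12-v23}, though stated for the indices $(1,2,3)$, holds verbatim for any three consecutive indices $(i,i+1,i+2)$: the factorisation $P_{i,i+1}-P_{i+1,i+2}=(v_{i,i+1}-v_{i+1,i+2})J$ with $J$ a nonzero $\Sfrak_{[i,i+2]}$-symmetric quadratic, combined with the factoring of the symmetric factor $J$ out of the antisymmetrisation, shows that $v_{i,i+1}-v_{i+1,i+2}$ is $\Pfrak_{[i,i+2]}$-reducible for each $i=1,\ldots,M-2$. This is exactly the content already recorded as the corollary to Lemma \ref{lemma:v12-v23}, now read at the level of the subsegment $[i,i+2]$ rather than $[1,M]$.

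Next, since $J_i(\vbf)$ is by hypothesis symmetric under the group $\Sfrak_{[i,i+2]}$, Lemma \ref{lemma:symmfactor} applied with $[a,b]=[i,i+2]$ gives that the product $(v_{i,i+1}-v_{i+1,i+2})\,J_i(\vbf)$ is $\Pfrak_{[i,i+2]}$-reducible. Lemma \ref{lemma:subsegmentPreduc} then promotes this to $\Pfrak_{[1,M]}$-reducibility of the summand. Invoking linearity of $\Pfrak_{[1,M]}$ once more over the sum $i=1,\ldots,M-2$ yields $\Pfrak_{[1,M]}(F)=0$, which is the claim.

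I expect no serious obstacle, as the proof is structurally identical to that of Proposition \ref{prop:2reducibility}. The only point requiring minor care is the translation-invariance step of the second paragraph, namely that the identity $P_{i,i+1}-P_{i+1,i+2}=(v_{i,i+1}-v_{i+1,i+2})J$ and the $\Sfrak_{[i,i+2]}$-symmetry of $J$ are unaffected by shifting the three active indices from $(1,2,3)$ to $(i,i+1,i+2)$. This is immediate, since $P$, the differences $v_{i,i+1}$, and the antisymmetriser $\Pfrak_{[i,i+2]}$ all depend only on the three relevant variables, so the computation of Lemma \ref{lemma:v12-v23} transports unchanged.
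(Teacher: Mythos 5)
Your argument is correct and is essentially identical to the paper's proof: both factor the $\Sfrak_{[i,i+2]}$-symmetric $J_i$ out of the antisymmetrisation, apply Lemma \ref{lemma:v12-v23} (translated to the indices $(i,i+1,i+2)$) to kill $v_{i,i+1}-v_{i+1,i+2}$ under $\Pfrak_{[i,i+2]}$, and then promote each summand to $\Pfrak_{[1,M]}$-reducibility via Lemma \ref{lemma:subsegmentPreduc}. Your explicit remark that the shift of indices is harmless is a minor elaboration the paper leaves implicit (it is already covered by the corollary stated after Lemma \ref{lemma:v12-v23}).
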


\begin{proof}
For the $i$-th term in \Ref{eq:def3reduction} we have
$$\Pfrak_{[i,i+2]}\big( (v_{i,i+1}-v_{i+1,i+2})\,J_i(\vbf) \big) \, = \,  J_i(\vbf)\Pfrak_{[i,i+2]}(v_{i,i+1}-v_{i+1,i+2})=0,$$
using first the symmetry of $J_i$, then Lemma \ref{lemma:v12-v23}. By Lemma \Ref{lemma:subsegmentPreduc},
each term is $\Pfrak_{[1,M]}$-reducible.
\end{proof}

The following conjecture supersedes Conjecture \ref{conj:Qsolution}.
It has been verified by means of computer algebra for $M\leq8$.

\begin{conj}\label{conj:Qeqs23red}
For the polynomials $Q^{(M)}(\vbf)$ given by \Ref{eq:Qsolution}
the left-hand-side of the jump equation \Ref{eq:jump-QM}
is in fact $3$-reducible, which, by Proposition \ref{prop:3reducibility},
implies $\Pfrak_{[1,M]}$-reducibility.

The  left-hand-side of the bulk equation \Ref{eq:bulk-QM}
is respectively a sum of a 2-reducible and a 3-reducible parts, which,
by Propositions \ref{prop:2reducibility} and \ref{prop:3reducibility},
implies $\Pfrak_{[1,M]}$-reducibility.
\end{conj}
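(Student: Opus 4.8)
The plan is to establish the two claims — $3$-reducibility of the jump equation \Ref{eq:jump-QM} and the $2$-plus-$3$-reducible splitting of the bulk equation \Ref{eq:bulk-QM} — by producing \emph{explicit} reducing data and then verifying the resulting polynomial identities. The organising principle is a quasi-homogeneity bookkeeping. Writing $P_{ij}=v_{ij}^3+12\g v_{ij}-4\b^2$ with $v_{ij}=v_i-v_j$, and recalling that $Q^{(M)}$ is homogeneous of $v$-degree $M-1$ and carries the scalar factor $(2\b)^{1-M}$, one checks that the jump equation \Ref{eq:jump-QM} is homogeneous of $v$-degree $M-1$ and $\b$-weight $2-M$, whereas the three summands of the bulk equation \Ref{eq:bulk-QM} sit in $v$-degrees $M+1$ (the $\widetilde{K}$ term), $M-1$ (the $\g$ term) and $M-2$ (the splitting term), with $\b$-weights $1-M$, $1-M$ and $3-M$. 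This count is the structural key: the single atom $P_{i,i+1}$ carries precisely the degrees $3$, $1$, $0$ with coefficients $1$, $12\g$, $-4\b^2$, so a $2$-reduction $\sum_i P_{i,i+1}G_i$ with $G_i$ symmetric under $v_i\leftrightarrow v_{i+1}$, of $v$-degree $M-2$ and $\b$-weight $1-M$, can feed all three bulk summands at once, leaving a homogeneous degree-$(M-1)$ remainder to be absorbed by $3$-reduction.

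For the jump equation I would rewrite the antisymmetric linear prefactors $V_{m_1m_2}(\vbf)$ and $V_{m_2m_1}(\vbf)$ so that the combination of the two product terms is organised into telescoping differences along the chain $v_1,\ldots,v_M$. The engine here is Lemma \ref{lemma:v12-v23}: the identity $P_{12}-P_{23}=(v_{12}-v_{23})J$ with $J$ being $\Sfrak_{[1,3]}$-symmetric is the prototype of a $3$-reducible generator, and the goal is to cast the jump left-hand side as $\sum_{i=1}^{M-2}(v_{i,i+1}-v_{i+1,i+2})J_i(\vbf)$ with $\Sfrak_{[i,i+2]}$-symmetric $J_i$. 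I expect the cleanest route to be induction on $M$: the first two terms of \Ref{eq:jump-QM} involve only $Q^{(n)}$ with $n<M$, for which reducing data is already available by the inductive hypothesis, and the third term $2m_1m_2\b_{m_1m_2}Q^{(M)}$ must be linked to them through the very recursion that \Ref{eq:jump-QM} encodes, modulo $3$-reducible corrections.

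For the bulk equation I would construct the $2$-reduction directly. Using $\widetilde{K}(\vbf)=\tfrac1M\sum_{i<j}v_{ij}^2$, the term $\widetilde{K}(\vbf)Q^{(M)}$ is a sum of $v_{ij}^2$-multiples of $Q^{(M)}$; the task is to choose symmetric $G_i$ so that the cubic parts $v_{i,i+1}^3G_i$ reproduce the nearest-neighbour ($j=i+1$) contributions. The crucial point, which is not automatic and must be proved, is that the \emph{same} $G_i$ then forces, through the $12\g v_{i,i+1}$ and $-4\b^2$ parts of $P_{i,i+1}$, the correct $(M^3-M)\g\,Q^{(M)}$ and $\b\sum\sqrt{n_1n_2M}\,Q^{(n_1)}Q^{(n_2)}$ contributions; these compatibility conditions are precisely where the special shape of $P$ (cubic, vanishing quadratic term) fixed in Lemma \ref{lemma:v12-v23} intervenes. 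The leftover — the mismatch between $\widetilde{K}Q^{(M)}$ and the nearest-neighbour cubes, together with all non-adjacent $v_{ij}$ contributions — should then be shown $3$-reducible by the same telescoping as in the jump case.

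The main obstacle will be obtaining the reducing polynomials $G_i$ and $J_i$ in closed form \emph{uniformly in $M$}, and proving the ensuing binomial identities. The combinatorics of $Q^{(M)}$ in \Ref{eq:Qsolution}, once multiplied by the linear and quadratic prefactors and by lower $Q^{(n)}$'s, produces dense signed sums of $(-1)^{j-i}\binom{M-1}{j-i-1}$ whose telescoping is opaque term by term; this is exactly why only $M\le 8$ has been checked by computer. I would attack this through a generating-function encoding of the family $\{Q^{(M)}\}$ — for instance the residue representation $(v_i-v_j)^{M-1}=(M-1)!\oint \tfrac{\rd t}{2\pi\ri\,t^M}\re^{t(v_i-v_j)}$ — which converts the signed-binomial sums into manipulations of $(1-\re^{-tv})$-type factors and may expose the telescoping in one stroke, or alternatively identify $\{Q^{(M)}\}$ with a known family of Bethe-normalisation polynomials for an affine $A_{N-1}$ Toda-type model (\textit{cf.}\ Section \ref{sec:discussion}) whose recursions are already established. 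Securing such a uniform identity, rather than the present case-by-case verification, is the genuine difficulty separating the conjecture from a theorem.
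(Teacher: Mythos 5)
The statement you are addressing is an open conjecture in the paper: the authors offer no analytical proof, only a computer-algebra verification for $M\leq8$, and they explicitly list a rigorous proof of Conjecture \ref{conj:Qeqs23red} among the unsolved problems in Section \ref{sec:discussion}. Your proposal is a strategy outline rather than a proof, and you concede as much in your final paragraph. The degree and $\b$-weight bookkeeping is correct and genuinely illuminating: the three monomials of $P_{i,i+1}=v_{i,i+1}^3+12\g\,v_{i,i+1}-4\b^2$ do match the three summands of the bulk equation \Ref{eq:bulk-QM} in $v$-degree and $\b$-weight, which explains structurally why a single $2$-reduction $\sum_i P_{i,i+1}G_i$ could feed all three at once. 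But the argument has a genuine gap at exactly the point you identify: the reducing polynomials $G_i$ and $J_i$ are never constructed, the compatibility conditions (that the \emph{same} $G_i$ forces the correct $\g$- and $\b$-terms) are named as the crucial point but not verified, and the resulting signed-binomial identities for the explicit $Q^{(M)}$ of \Ref{eq:Qsolution} are not established for any $M$, let alone uniformly.

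Two further concrete worries. First, your inductive treatment of the jump equation is circular as stated: you propose to link the term $2m_1m_2\b_{m_1m_2}Q^{(M)}$ to the lower-order products ``through the very recursion that \Ref{eq:jump-QM} encodes,'' but \Ref{eq:jump-QM} is the identity to be proven, and for a given $M$ there are several pairs $(m_1,m_2)$, each yielding a potentially different determination of $Q^{(M)}$ modulo reducibility; showing that these agree with one another and with the closed form \Ref{eq:Qsolution} is the entire content of the conjecture, not something an induction may assume. Second, $3$-reducibility is strictly stronger than $\Pfrak_{[1,M]}$-reducibility, and telescoping along nearest-neighbour triples need not absorb the non-adjacent contributions coming from $\widetilde{K}(\vbf)=\frac{1}{M}\sum_{i<j}v_{ij}^2$; you flag this leftover but give no argument that it lies in the span of $(v_{i,i+1}-v_{i+1,i+2})$ times $\Sfrak_{[i,i+2]}$-symmetric polynomials. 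Until explicit $G_i$, $J_i$ are produced and the identities checked uniformly in $M$ (your generating-function encoding of $(v_i-v_j)^{M-1}$ is a reasonable avenue), the proposal does not advance beyond what the paper already has, namely a structurally motivated conjecture confirmed case by case for $M\leq8$.
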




\begin{thebibliography}{10}

\bibitem{arinshtein1979quantum}
A.~Arinshtein, V.~Fateyev, and A.~B. Zamolodchikov.
\newblock Quantum ${S}$-matrix of the $(1+1)$-dimensional {T}odd chain.
\newblock {\em Phys.\ Lett.\ B}, 87(4):389--392, 1979.
\newblock \href {http://dx.doi.org/10.1016/0370-2693(79)90561-6}
  {\path{doi:10.1016/0370-2693(79)90561-6}}.

\bibitem{berezin1964schrodinger}
F.~Berezin, G.~Pokhil, and V.~Finkelberg.
\newblock Schr{\"o}dinger equation for a system of one-dimensional particles
  with point interaction (in {R}ussian).
\newblock {\em Vestnik MGU}, 1:21--28, 1964.

\bibitem{BogoliubiovIzerginKorepinBookCorrFctAndABA}
N.~M. Bogoliubov, A.~G. Izergin, and V.~E. Korepin.
\newblock {\em {Q}uantum {I}nverse {S}cattering Method, Correlation Functions
  and {A}lgebraic {B}ethe {A}nsatz}.
\newblock Cambridge Monographs on Mathematical Physics. Cambridge University
  Press, Cambridge, 1993.
\newblock \href {http://dx.doi.org/10.1017/CBO9780511628832}
  {\path{doi:10.1017/CBO9780511628832}}.

\bibitem{braden1990affine}
H.~Braden, E.~Corrigan, P.~Dorey, and R.~Sasaki.
\newblock Affine {T}oda field theory and exact ${S}$-matrices.
\newblock {\em Nucl.\ Phys.\ B}, 338(3):689--746, 1990.
\newblock \href {http://dx.doi.org/10.1016/0550-3213(90)90648-W}
  {\path{doi:10.1016/0550-3213(90)90648-W}}.

\bibitem{Case1984polynomial}
K.~Case.
\newblock {P}olynomial constants for the quantized {NLS} equation.
\newblock {\em J.\ Math.\ Phys.}, 25(7):2306--2314, 1984.
\newblock \href {http://dx.doi.org/10.1063/1.526400}
  {\path{doi:10.1063/1.526400}}.

\bibitem{DaviesKorepin1989}
B.~Davies and V.~Korepin.
\newblock Higher conservation laws for the quantum non-linear {S}chr\"{o}dinger
  equation.
\newblock Technical report, Centre for Mathematical Analysis of Australian
  National University CMA-R33-89, 1989.
\newblock URL: \url{http://arxiv.org/abs/1109.6604}.

\bibitem{Gaudin2014BWf}
M.~Gaudin.
\newblock {\em The Bethe Wavefunction}.
\newblock Cambridge University Press, Cambridge, 2014.
\newblock \href {http://dx.doi.org/10.1017/CBO9781107053885}
  {\path{doi:10.1017/CBO9781107053885}}.

\bibitem{Gutkin1985conservation}
E.~Gutkin.
\newblock Conservation laws for the nonlinear schr{\"o}dinger equation.
\newblock In {\em Annales de l'IHP Analyse non lin{\'e}aire}, volume~2, pages
  67--74, 1985.
\newblock URL:
  \url{http://archive.numdam.org/article/AIHPC_1985__2_1_67_0.pdf}.

\bibitem{kac1987bombay}
V.~Kac and A.~Raina.
\newblock {\em Bombay Lectures on Highest Weight Representations of Infinite
  Dimensional Lie Algebras}, volume~2 of {\em Advanced Series in Mathematical
  Physics}.
\newblock World Scientific, Singapore, 1987.

\bibitem{konopelchenko1993introduction}
B.~Konopelchenko.
\newblock {\em Introduction to Multidimensional Integrable Equations: The
  Inverse Spectral Transform in $2+1$ Dimensions}.
\newblock Plenum Monographs in Nonlinear Physics. Springer US, 1993.
\newblock \href {http://dx.doi.org/10.1007/978-1-4899-1170-4}
  {\path{doi:10.1007/978-1-4899-1170-4}}.

\bibitem{KozlowskiSklyanin2013}
K.~Kozlowski and E.~Sklyanin.
\newblock Combinatorics of {G}eneralized {B}ethe {E}quations.
\newblock {\em Lett.\ Math.\ Phys.}, 103(10):1047--1077, 2013.
\newblock \href {http://arxiv.org/abs/1205.2968} {\path{arXiv:1205.2968}},
  \href {http://dx.doi.org/10.1007/s11005-013-0630-9}
  {\path{doi:10.1007/s11005-013-0630-9}}.

\bibitem{kulish1986quantum}
P.~Kulish.
\newblock Quantum nonlinear wave interaction system.
\newblock {\em Physica D: Nonlinear Phenomena}, 18(1--3):360--364, 1986.
\newblock \href {http://dx.doi.org/10.1016/0167-2789(86)90197-1}
  {\path{doi:10.1016/0167-2789(86)90197-1}}.

\bibitem{KulishManakovFaddeev1976}
P.~P. Kulish, S.~V. Manakov, and L.~D. Faddeev.
\newblock Comparison of the exact quantum and quasiclassical results for a
  nonlinear schr{\"o}dinger equation.
\newblock {\em Theor.\ Math.\ Phys.}, 28(1):615--620, July 1976.
\newblock \href {http://dx.doi.org/10.1007/BF01028912}
  {\path{doi:10.1007/BF01028912}}.

\bibitem{Lee1954}
T.~D. Lee.
\newblock Some special examples in renormalizable field theory.
\newblock {\em Phys.\ Rev.}, 95:1329--1334, Sep 1954.
\newblock \href {http://dx.doi.org/10.1103/PhysRev.95.1329}
  {\path{doi:10.1103/PhysRev.95.1329}}.

\bibitem{LiebLiniger1963}
E.~H. Lieb and W.~Liniger.
\newblock Exact analysis of an interacting bose gas. {I}. {T}he general
  solution and the ground state.
\newblock {\em Phys.\ Rev.}, 130:1605--1616, May 1963.
\newblock \href {http://dx.doi.org/10.1103/PhysRev.130.1605}
  {\path{doi:10.1103/PhysRev.130.1605}}.

\bibitem{Litvinov2013}
A.~V. Litvinov.
\newblock On spectrum of {ILW} hierarchy in conformal field theory.
\newblock {\em JHEP}, 2013(11):1--14, 2013.
\newblock \href {http://dx.doi.org/10.1007/JHEP11(2013)155}
  {\path{doi:10.1007/JHEP11(2013)155}}.

\bibitem{ManakovSantini2008}
S.~Manakov and P.~Santini.
\newblock On the solutions of the d{KP} equation: the nonlinear
  {R}iemann-{H}ilbert problem, longtime behaviour, implicit solutions and wave
  breaking.
\newblock {\em J.\ Phys.\ A: Math.\ Theor.}, 41(5), 2008.
\newblock \href {http://dx.doi.org/10.1088/1751-8113/41/5/055204}
  {\path{doi:10.1088/1751-8113/41/5/055204}}.

\bibitem{sklyanin1982quantum}
E.~Sklyanin.
\newblock Quantum version of the method of inverse scattering problem.
\newblock {\em J.\ Soviet Math.}, 19(5):1546--1596, 1982.
\newblock \href {http://dx.doi.org/10.1007/BF01091462}
  {\path{doi:10.1007/BF01091462}}.

\bibitem{Sklyanin1988}
E.~K. Sklyanin.
\newblock Quantization of the continuous {H}eisenberg ferromagnet.
\newblock {\em Lett.\ Math.\ Phys.}, 15(4):357--368, 1988.
\newblock \href {http://dx.doi.org/10.1007/BF00419595}
  {\path{doi:10.1007/BF00419595}}.

\bibitem{StanleyFomin_book}
R.~P. Stanley and S.~Fomin.
\newblock {\em Enumerative Combinatorics, vol.\ 2}, volume~62 of {\em Cambridge
  Studies in Advanced Mathematics}.
\newblock Cambridge University Press, Cambridge, 1999.
\newblock \href {http://dx.doi.org/10.1017/CBO9780511609589}
  {\path{doi:10.1017/CBO9780511609589}}.

\end{thebibliography}
\end{document}